\documentclass[11pt,a4paper]{article}
\pdfoutput=1

\usepackage[T1]{fontenc}
\usepackage[utf8]{inputenc}
\usepackage[a4paper,margin=1in]{geometry}
\usepackage{lmodern}
\usepackage{graphicx}
\usepackage{amsmath,amssymb,amsfonts}
\usepackage{amsthm}
\usepackage{xcolor}
\usepackage{booktabs}
\usepackage{array}
\usepackage{braket}
\usepackage{tikz}
\usetikzlibrary{arrows.meta,decorations.pathmorphing,decorations.pathreplacing,calc}
\usepackage{float}
\usepackage{enumitem}
\usepackage{microtype}
\usepackage[hidelinks]{hyperref}

\usepackage{orcidlink}
\usepackage[nameinlink,noabbrev]{cleveref}

\newcolumntype{L}[1]{>{\raggedright\arraybackslash}p{#1}}
\newcolumntype{C}[1]{>{\centering\arraybackslash}p{#1}}

\tikzset{
  qdbactor/.style={
    draw,
    rectangle,
    minimum width=2.1cm,
    minimum height=0.45cm,
    inner sep=2pt,
    font=\footnotesize\bfseries
  },
  qdbbox/.style={
    draw,
    rectangle,
    inner sep=3pt,
    align=left,
    font=\footnotesize
  },
  qdblife/.style={line width=0.4pt},
  qdbflow/.style={-{Latex[length=1.8mm]}, line width=0.5pt},
  qdbphase/.style={draw, dashed, line width=0.4pt},
  qdbann/.style={font=\footnotesize, fill=white, inner sep=1pt},
  qdbbrace/.style={decorate, decoration={brace, amplitude=3pt}, line width=0.4pt}
}

\newcommand{\verifier}{\mathcal{V}}
\newcommand{\prover}{\mathcal{P}}
\newcommand{\adversary}{\mathcal{A}}

\newcommand{\sample}{\overset{\$}{\leftarrow}}
\newcommand{\negl}[1]{\operatorname{negl}(#1)}
\newcommand{\DF}{\mathrm{DF}}
\newcommand{\MF}{\mathrm{MF}}

\newcommand{\pDF}{p_{\DF}^{\star}}
\newcommand{\pMF}{p_{\MF}^{\star}}
\newcommand{\pDFCV}{p_{\DF}^{\mathrm{CV}}}
\newcommand{\pMFCV}{p_{\MF}^{\mathrm{CV}}}
\newcommand{\view}{v}
\newcommand{\AvgOmega}{\frac{1}{|\Omega|}\sum_{\omega\in\Omega}}

\theoremstyle{plain}
\newtheorem{theorem}{Theorem}

\newtheorem{lemma}[theorem]{Lemma}

\theoremstyle{definition}
\newtheorem{definition}{Definition}

\title{Security evaluation of quantum distance-bounding protocols via semidefinite programming}
\author{
  Kevin Bogner\,\orcidlink{0009-0003-5167-7177}$^{1}$,
  Aysajan Abidin\,\orcidlink{0000-0002-5128-3608}$^{1}$,
  Dave Singel\'ee\,\orcidlink{0000-0001-9084-698X}$^{2}$,
  Bart Preneel\,\orcidlink{0000-0003-2005-9651}$^{1}$\\[0.5em]
  \small $^{1}$COSIC, KU Leuven, Leuven, Belgium\\
  \small $^{2}$DistriNet, KU Leuven, Leuven, Belgium\\
  \small \texttt{kevin.bogner@kuleuven.be, aysajan@kuleuven.be,}\\
  \small \texttt{dave.singelee@kuleuven.be, bart.preneel@kuleuven.be}
}
\date{}

\begin{document}

\maketitle

\begin{abstract}
Quantum distance-bounding (QDB) protocols let a verifier check that a
prover is both genuine and physically nearby. During a timed fast phase
of quantum communication, the verifier measures round-trip times to
obtain an upper bound on the prover's distance.
Comparing the security of QDB protocols is difficult because the optimal
attack depends strongly on the protocol's structure. For a uniform
comparison, we isolate the fast phase and study one-round distance-fraud
(DF) and mafia-fraud (MF) games. For discrete-variable QDB, we show that
these games reduce to convex optimization problems and can therefore be
solved exactly with semidefinite programming; each MF value comes with an
explicit attack achieving it and a matching certificate that no attack
does better. This contrasts with quantum position verification, where an
attack is split between two separated parties, so its optimization is
nonconvex and analyses rely on relaxations. In our MF game, the
cooperating pair collapses to a single sequential strategy, which keeps
the game convex and its exact value computable. Across the
discrete-variable protocols we examine, the best one-round DF attack
succeeds with the same probability (\(1/2\)) for every protocol, whereas
MF clearly separates the protocols. For continuous-variable QDB, we
report estimated attack success probabilities from a calibrated Gaussian
attack model. The benchmark covers protocols whose fast phase itself
authenticates the prover; designs that follow Brands and Chaum and
instead bind the fast phase with a final authenticated message, like
the earliest QDB proposal, fall outside it and are treated separately. Of the four
protocols studied, two had no previously known one-round attack values,
and we report the first ones; for the other two, we find MF attacks
with higher success probability than previously reported. Overall, one-round MF resistance
depends on whether an attacker can use information revealed early by the
prover to answer a fresh challenge from the verifier.
\end{abstract}

\section{Introduction}

Distance-bounding (DB) protocols let a verifier check that a prover
knows the right secret key and that it is physically
nearby~\cite{brandschaum1994distance,hancke2005rfid}. Think of a car (the
verifier) checking that the genuine key fob (the prover) is actually near
the car, and not in the owner's home while an attacker relays its
signals. To this end, a DB protocol combines untimed setup messages (the
slow phase) with timed challenge-response rounds (the fast phase). The
round-trip time of each response limits how far away the prover can be. In quantum distance bounding (QDB), the
fast phase uses quantum communication, so the prover must receive,
process, and answer fresh quantum data within a strict time budget. This
has led to a growing family of QDB protocols, including
prepare-and-measure, entanglement-based, and continuous-variable QDB
(CV-QDB)
constructions~\cite{abidin2017towards,abidin2019quantum,abidin2025entanglement,bogner2024entangled,bogner2025continuous}.

As in classical DB, the main attack classes are distance fraud (DF),
mafia fraud (MF), and terrorist fraud (TF). In distance fraud, a
dishonest prover that holds a valid key but sits outside the allowed
distance tries to appear closer than it is. In mafia fraud, an attacker
without the key sits between an honest verifier and an honest prover and
relays or manipulates their messages; the relay attack on keyless car
entry sketched above is the classic example. In terrorist fraud, a
dishonest prover helps a nearby accomplice make the verifier accept,
without handing over its long-term key.

A central difficulty, made explicit by the QDB security framework
of~\cite{bogner2026}, is that QDB protocols are not easy to compare on common
ground. Their fast phases reveal different information at different
times, so the optimal attack can change substantially from one design to
another. Consequently, much of the literature focuses on protocol-specific
attacks, which are useful within a single protocol but do not give a
uniform comparison across QDB variants.

We close this gap by using the QDB security framework
of~\cite{bogner2026}, which formalizes the DF and MF games and the
attacker's capabilities. From it, we derive one-round fast-phase games
and optimize them via semidefinite programming (SDP). Throughout the
paper, the \emph{value} of such a game is the highest success probability
that any allowed attacker strategy can achieve in that single round. For
discrete-variable QDB (DV-QDB) protocols, the SDPs compute these values
exactly; for Gaussian CV-QDB we use a calibrated Gaussian estimator.

This exact solvability is the main conceptual point of the paper, and it
is best seen against quantum position verification
(QPV)~\cite{kent2011tagging,buhrman2014position}, the closest
well-studied setting, in which verifiers use timed quantum messages to
check a prover's location. A QPV attack is mounted by two separated
parties that intercept the verifiers' messages and must coordinate under
timing constraints. Optimizing such a split attack is in general a
nonconvex problem, so QPV analyses rely on relaxations matched by
explicit attacks. The QDB MF adversary is also a cooperating pair, but in
the one-round game the pair collapses to a single sequential strategy:
first query the honest prover, then answer the verifier's challenge
(\Cref{sec:mf-sdp}). This collapse keeps the game convex, so the optimal
attack value is not merely bounded but computed exactly. QDB thus offers
a position-verification-like setting in which one-round attack values
are exactly computable.

\subsection{Contributions}
\label{sec:contributions}

For DV-QDB, the reduced DF and MF games become SDPs. In the DF game, the
attacker simply prepares a quantum state, which gives a state SDP. In the
MF game, the attacker acts in two steps: it first queries the honest
prover before the timed challenge arrives (a \emph{pre-ask}) and then
answers the verifier's fresh challenge; such two-step strategies are
described by two-slot combs; this gives a comb SDP. Because both games
are convex, the SDP optimum is the exact value of the one-round game,
not a relaxation of it. Each MF value is certified by a matching primal
and dual pair in exact arithmetic: an explicit attack together with a
matching upper-bound certificate of the same value
(\Cref{app:mf-certificate-checks}). For Gaussian CV-QDB, exact SDPs are
out of reach; we instead report values from a calibrated Gaussian
estimator, which are estimates rather than certified values. Both sets
of results are summarized in \Cref{tab:fast-phase-summary}.

\subsection{Assumptions}
\label{sec:assumptions}

The results should be read under the following assumptions and scope
restrictions:

\begin{itemize}
\item One-round analysis.
We extract a single timed fast round from the full QDB execution and omit
correlations across rounds. The reported values are one-round fast-phase
values, not full-protocol fraud probabilities. \Cref{sec:conclusion}
discusses how these values relate to security over all \(n\) rounds.
Later protocol checks and implementation-level attacks are outside these
games.

\item Ideal randomness.
We model the secret values derived from the pseudorandom function (PRF) in the
slow phase as independent uniform bits. This idealization assumes that nonces are fresh and that the PRF is
secure against quantum polynomial-time (QPT) adversaries, i.e., efficient
quantum attackers.

\item Timing model.
We use the timing constraints of the QDB security framework~\cite{bogner2026}. In DF, the fast
response cannot depend on information from the verifier's actual challenge. In
MF, the response may use information obtained from the honest prover
before the timed verifier challenge.

\item Benchmark scope.
The benchmark covers protocols whose fast phase itself authenticates
the prover. QDB 2017~\cite{abidin2017towards} instead follows the
design of Brands and Chaum~\cite{brandschaum1994distance}: its fast phase
only checks proximity, and a final message authentication code (MAC)
authenticates the session. The one-round game strips that MAC, so we
treat QDB 2017 separately in \Cref{app:qdb2017}.

\item Protocol-specific reductions.
For the Mutual QDB protocol~\cite{abidin2025entanglement}, we report only
the first of its two timed exchanges (the first timed return leg). For
the E91 QDB protocol~\cite{bogner2024entangled}, we report only the
rounds in which verifier and prover measure in the same basis and the
outcome is checked (the value-check branches).

\item CV-QDB estimator.
For Gaussian CV-QDB~\cite{bogner2025continuous}, we use the restricted
Gaussian estimator of \Cref{sec:cvqdb} with zero added response noise,
matching the DV-QDB analysis and making the attacker as powerful as
possible. The reported MF value is an estimate from a finite parameter
grid, not a certified optimum; the DF value is the exact optimum within
this restricted Gaussian family.

\item Terrorist fraud.
TF security is about whether a malicious prover's help can be reused
across protocol runs, so it falls outside our one-round analysis (see
\Cref{sec:tf-sdp}).
\end{itemize}

\begin{table}[!t]
\centering
\footnotesize
\setlength{\tabcolsep}{4pt}
\caption{Values of the reduced one-round fast-phase games under the
assumptions of \Cref{sec:assumptions}; for the DV-QDB protocols, each
entry is the best success probability of the corresponding one-round
attack. The prior mafia-fraud attack column lists
previously reported per-round fast-phase attack values when a directly
comparable value is available. \(\text{--}\) marks entries with no comparable
prior per-round value; \(^{*}\) marks values exceeding previously reported
per-round fast-phase attacks; \(^{\dagger}\) marks values not previously
reported to our knowledge. The displayed DV-QDB decimals are rounded from
exact one-round values computed via SDP; the MF values are certified by
matching primal and dual witnesses in exact arithmetic
(\Cref{app:mf-certificate-checks}). The CV-QDB entries
assume zero added response noise;
the DF entry is the exact optimum within the restricted Gaussian model,
and the MF entry is a finite-grid Gaussian estimate. QDB
2017~\cite{abidin2017towards}, whose fast phase is bound by a final
MAC, falls outside this benchmark and is treated in
\Cref{app:qdb2017}.}
\label{tab:fast-phase-summary}
\vspace{0.35em}
\begin{tabular}{@{}lccc@{}}
\toprule
Protocol
& Distance-fraud value
& Prior mafia-fraud attack
& Mafia-fraud value \\
\midrule
\multicolumn{4}{@{}l}{\emph{Discrete-variable QDB protocols}}\\
QDB 2019~\cite{abidin2019quantum}
& \(0.5000\)
& \(0.875\)~\cite{verschoor2022quantum}
& \(0.9045^{*}\) \\
Mutual QDB~\cite{abidin2025entanglement}
& \(0.5000\)
& \(0.625\)~\cite{abidin2025entanglement}
& \(0.7500^{*}\) \\
E91 QDB~\cite{bogner2024entangled}
& \(0.5000^{\dagger}\)
& \(\text{--}\)
& \(0.9332^{\dagger}\) \\
\midrule
\multicolumn{4}{@{}l}{\emph{Continuous-variable QDB protocol}}\\
CV-QDB~\cite{bogner2025continuous}
& \(0.3592^{\dagger}\)
& \(\text{--}\)
& \(0.9138^{\dagger}\) \\
\bottomrule
\end{tabular}
\end{table}

\section{Related work}\label{sec:related-work}

DB was introduced by Brands and Chaum~\cite{brandschaum1994distance} and
later adapted into practical designs for radio-frequency identification
applications~\cite{hancke2005rfid}. Later work introduced the standard attack classes DF, MF, and TF, together
with a large body of protocol designs and analyses; a convenient overview is
given in~\cite{avoine2018survey}. Modeling TF is harder than DF or MF, and the literature has yet to agree
on a single definition of TF
resistance~\cite{fischlin2013terrorism,vaudenay2013modeling}.

Early QDB work proposed a qubit-based fast phase with a final classical
MAC; we call this protocol
QDB 2017~\cite{abidin2017towards}. QDB 2019~\cite{abidin2019quantum}
refined this by authenticating the prover through its quantum
measurement-and-preparation response, removing the MAC. This design
difference sets the scope of our benchmark (\Cref{sec:applications}).
More recent work
extended QDB to entanglement-based and continuous-variable settings: a
mutual entanglement-based protocol we call Mutual
QDB~\cite{abidin2025entanglement}, an E91-inspired variant we call E91
QDB~\cite{bogner2024entangled}, and a continuous-variable protocol we
call CV-QDB~\cite{bogner2025continuous}.

The work methodologically closest to ours is a cryptanalysis of early QDB
protocols~\cite{verschoor2022quantum}, which sharpened several attack
analyses, studied TF countermeasures and their limitations, and used SDP
to optimize measurements in protocol-specific attacks. We extend this idea
to a different scope: rather than applying SDP inside individual attacks,
we use it to capture the entire one-round strategy space. The analysis
rests on two foundations. First, the QDB security framework
of~\cite{bogner2026} formalizes the attacker's capabilities. Second,
from~\cite{mauw2018distance} we take the causal viewpoint that a party's
response can only depend on information that has had time to physically
reach it before it answers. Instead of tracking explicit times and
locations, we therefore only track which information is available to
each party at the moment it must respond. From these, we derive
one-round DF and MF games that we solve exactly for the DV-QDB protocols
and estimate for CV-QDB.

The closest related setting is QPV, in which verifiers use timed
quantum messages to check a prover's
location~\cite{kent2011tagging,buhrman2014position}. Unconditional QPV
is impossible against attackers that share unlimited
entanglement~\cite{buhrman2014position}, so QPV security is studied
against bounded attackers. One-round QPV games have been studied in
depth. The monogamy-of-entanglement game
of~\cite{tomamichel2013monogamy} gives closed-form attack values for
BB84-style QPV when the attackers share no entanglement beforehand,
\cite{bluhm2022position} analyzes single-qubit QPV under
multi-qubit attacks, and \cite{escolafarras2023loss} bounds
attack values for lossy QPV with SDP relaxations that are matched by
explicit attacks. Continuous-variable QPV has also
been analyzed~\cite{allerstorfer2023cvqpv,escolafarras2024cvqpv}. QDB
differs from QPV in two ways that change the attack model. First, the
QDB prover authenticates itself with a shared secret key, and the hidden
protocol variables in our games come from that key material. Second, the
QDB MF adversaries may query the honest prover before the timed
challenge (the pre-ask), and in the one-round game the cooperating pair
acts as a single sequential strategy (\Cref{sec:mf-sdp}), whereas a QPV
attack is mounted by two separated parties that intercept the verifiers'
messages and must coordinate under timing constraints using pre-shared
entanglement. Optimizing a QPV attack is in general a nonconvex problem
over separated strategies, so QPV analyses rely on relaxations and
explicit attacks, while the one-round DV-QDB games studied here
are convex and can be solved exactly as single SDPs.

\section{Preliminaries}\label{sec:prelims}

Our analysis rests on two standard ingredients, recalled in this section.
The first is the QDB security framework~\cite{bogner2026}, which sets the
rules of the attacks we study. The second is semidefinite programming, the
convex-optimization tool we use to solve the resulting games.

\subsection{QDB security framework}\label{sec:security-model}
We first recall the completeness definition from the QDB security
framework~\cite{bogner2026}. Completeness means that an honest prover
within the distance bound $B$ of the verifier is accepted with
overwhelming probability.

\begin{definition}[Completeness]\label{def:completeness}
    If an honest prover~\(\prover\) is located within the distance bound at distance \(d\le B\) from the verifier~\(\verifier\), where \(r_{\verifier}\) denotes the verifier's internal randomness, then
    \[
        \Pr\Bigl[
            (\verifier(x,r_{\verifier})\;\leftrightarrow\;\prover(x))
            \text{ accepts}
            \Bigr]
        \;\ge\; 1 - \negl{\lambda}.
    \]
\end{definition}

The framework also defines the DF and MF experiments, which the reduced
games below build on.

\begin{definition}[Distance-fraud experiment]
    \label{def:df-experiment}
    \noindent
    \begin{enumerate}
        \item Setup.
        The verifier~\(\verifier\) and a dishonest prover~\(\prover^{\star}\) share a long-term secret key~\(x\).
        The dishonest prover~\(\prover^{\star}\) is located outside the distance bound at a distance \(d>B\) from~\(\verifier\).
        
        \item Challenge session.
        \(\verifier\) and \(\prover^{\star}\) engage in a complete execution of the QDB protocol, which includes \(n\) fast rounds subject to the distance bound check enforced by~\(\verifier\).
        
        \item Output.
        \(\verifier\) outputs \(\mathrm{Out}_{\verifier}\in\{0,1\}\).
    \end{enumerate}
\end{definition}

\begin{definition}[Mafia-fraud experiment]
  \label{def:mf-experiment}
    \noindent
    \begin{enumerate}
        \item Setup.
              Verifier~\(\verifier\) and honest prover~\(\prover\) share a long-term
              secret key~\(x\).
              Two QPT adversaries,
              \(\adversary_1\) (co-located with~\(\verifier\)) and
              \(\adversary_2\) (co-located with~\(\prover\)),
              share a timing-constrained authenticated classical and quantum channel.
        
        \item Learning phase (Pre-ask).
              The adversaries may initiate and control any polynomial number of auxiliary executions of the QDB protocol between $\verifier$ and $\prover$:
              in each such execution, every classical and quantum message between the honest parties passes through $(\adversary_1,\adversary_2)$, who may relay, delay, drop, modify, or inject messages arbitrarily (subject only to the timing constraints).
              At the end of this phase, $(\adversary_1,\adversary_2)$ retain the entire classical transcript and their joint quantum state.
        
        \item Challenge session.
              The adversaries $(\adversary_1,\adversary_2)$ interact with $\verifier$ in a full execution of the QDB protocol.
              Simultaneously, they may interact with the honest $\prover$ (who uses the correct long-term key $x$ and is at distance $d > B$).
              $\verifier$ enforces the distance bound $B$.
        
        \item Output.
        $\verifier$ outputs \(\mathrm{Out}_{\verifier}\in\{0,1\}\).
    \end{enumerate}
\end{definition}

Each experiment is a game whose output is \(1\) when the verifier accepts
and \(0\) otherwise. The adversary's DF or MF advantage is its
probability of making the verifier accept. A QDB protocol is secure
against the corresponding fraud if this advantage is negligible in the
security parameter~\(\lambda\) for every QPT adversary that respects the
location and timing restrictions. Here \(\lambda\) sets the length of the
protocol's secret values (the nonces are \(\lambda\)-bit strings) and the
strength of its PRF and MAC. The full protocol runs \(n\) fast
rounds, and security must hold even against an adversary that attacks all
of them and runs any polynomial number of extra sessions.

\subsection{Semidefinite programs}\label{sec:sdp-prelim}

SDP methods are standard tools for optimizing quantum
strategies~\cite{gutoski2007toward,chiribella2009theoretical} and have
been used before to analyze attacks on QDB
protocols~\cite{verschoor2022quantum}.
For background, see \cite{boyd2004convex} on convex optimization and
\cite{watrous2018theory,skrzypczyk2023semidefinite} on SDPs in quantum
information, where a typical use is to compute the best success
probability that any quantum strategy can achieve in a given task. We
use SDPs in exactly this way: the strategies we optimize over are
attacks, so the computed value bounds what any attacker can do.

The quantities our programs optimize are probabilities, so we need
matrix tools that can never produce a negative number. A matrix
\(X\) is \emph{Hermitian} when it equals its own conjugate transpose;
its eigenvalues are then real. If in addition no eigenvalue is
negative, \(X\) is \emph{positive semidefinite} (PSD), written
\(X\succeq 0\); equivalently, \(\langle\psi|X|\psi\rangle\ge 0\) for
every vector \(|\psi\rangle\). Hermitian matrices are thus the matrix
analogue of real numbers, and PSD matrices the analogue of
nonnegative numbers. Our optimization variables will always be PSD
matrices, precisely to enforce this nonnegativity. Comparisons are defined as for numbers:
for Hermitian \(A\) and \(B\), we write \(A\succeq B\) when
\(A-B\succeq 0\). The \emph{trace} \(\operatorname{Tr}(M)\) is the
sum of the diagonal entries of \(M\); the trace of a product,
\(\operatorname{Tr}(AB)\), pairs the entries of \(A\) and \(B\) and
is the matrix analogue of the dot product of two vectors. One fact
is used repeatedly below, the analogue of the rule that a product of
two nonnegative numbers is nonnegative: if \(A\succeq 0\) and
\(B\succeq 0\), then \(\operatorname{Tr}(AB)\ge 0\). In quantum theory, this fact
guarantees that measuring a state never produces a negative
probability.

An SDP maximizes a linear function of a single PSD matrix variable,
subject to linear constraints:
\begin{equation}\label{eq:sdp-standard-form}
\max_{X\succeq 0}\ \operatorname{Tr}(CX)
\qquad\text{subject to}\qquad
\operatorname{Tr}(A_iX)=b_i \quad\text{for } i=1,\dots,m,
\end{equation}
where the Hermitian matrices \(C,A_1,\dots,A_m\) and the numbers
\(b_1,\dots,b_m\) are fixed data and the PSD matrix \(X\) is the
variable. In our programs the variable \(X\) is the attack strategy,
the one thing the adversary is free to choose, and the fixed data
describe the protocol under attack. The function being maximized,
called the \emph{objective}, is then the probability that the attack
succeeds. The constraints, together with \(X\succeq 0\), are the
conditions that make \(X\) a physical strategy, one that quantum
mechanics allows the adversary to implement: the linear constraints
normalize \(X\) so that its outcome probabilities sum to one, and
\(X\succeq 0\) keeps them nonnegative. Both the objective
and the constraints are linear in \(X\), because \(X\) enters them
only through traces against fixed matrices.

Our SDPs will not always look exactly like
\Cref{eq:sdp-standard-form}: some minimize instead of maximize, some
have inequality constraints instead of equalities, and some use
several matrix variables at once. Standard rewritings turn each of
these variations into the form above, so we state every program in
whatever form is most natural. Every SDP is \emph{convex}. A general
optimization problem can have local optima: solutions better than all
nearby alternatives but worse than the true optimum. A numerical
solver improves its answer step by step, so it can get stuck at such
a point and report it as the best. In a convex problem every local
optimum is the global one, so solvers converge reliably to the true
global optimum.

Quantum problems map to SDPs directly, because the basic objects of
quantum theory are PSD matrices with linear side conditions. A state
is a density matrix \(\sigma\) (\(\sigma\succeq 0\),
\(\operatorname{Tr}(\sigma)=1\)). A measurement with outcomes \(i\) is
a set of matrices \(\{E_i\}\), called a POVM, with each
\(E_i\succeq 0\) and \(\sum_i E_i=I\); outcome \(i\) occurs on state
\(\sigma\) with probability \(\operatorname{Tr}(E_i\sigma)\). These
numbers behave as probabilities must: each is nonnegative by the
trace fact above, and because \(\sum_i E_i=I\), they sum to
\(\operatorname{Tr}(\sigma)=1\). A
channel, and more generally a multi-step strategy, is again a PSD
matrix with linear conditions, in the Choi form introduced in
\Cref{sec:mf-sdp}. Probabilities like \(\operatorname{Tr}(E_i\sigma)\)
are traces of products and therefore linear in each object
separately: fix the measurement and the probability is linear in the
state; fix the state and it is linear in the measurement. Linear has
a concrete meaning here: prepare an equal mixture of two states, and
each outcome probability is the average of the two separate
probabilities. This is exactly the linearity an SDP needs.

Modeling a problem as an SDP therefore takes three steps: identify
what the optimizing party freely chooses, while everything the
protocol fixes becomes constant data; write each free choice as a PSD
variable together with the linear conditions that make it physical;
and check that the
success probability is linear in these variables. Only the last step
can fail. It holds whenever a single party optimizes, but when two
separated parties optimize jointly, their choices multiply inside the
objective and the problem is in general nonconvex. That is the QPV
obstacle discussed in \Cref{sec:related-work}: there too the
optimizing party is the adversary, but a QPV adversary is a pair of
attackers at two separated locations, exactly the situation where
choices multiply. Our MF adversary is also a cooperating pair, yet
our games avoid the obstacle, because in the one-round game the pair
collapses to a single sequential strategy.

We now carry out these three steps on a simple example:
minimum-error discrimination of two
states~\cite{helstrom1969quantum}. The task is a toy version of
mafia fraud: the adversary holds a challenge state that encodes a
bit it does not know and must guess that bit. A referee
flips a fair coin \(b\in\{0,1\}\) and sends the qubit state
\(\rho_b\), where \(\rho_0=|0\rangle\langle 0|\) and
\(\rho_1=|+\rangle\langle +|\) with
\(|+\rangle=(|0\rangle+|1\rangle)/\sqrt{2}\); the receiver measures
the qubit and guesses \(b\). The only free choice is the measurement,
a two-outcome POVM \((E_0,E_1)\) with outcome \(i\) meaning guess
\(b=i\), and the success probability is linear in \((E_0,E_1)\). The
optimal guessing probability is therefore the SDP value
\begin{equation}\label{eq:helstrom-primal}
p^{\star}
=
\max_{E_0,E_1}\ \tfrac12\operatorname{Tr}(E_0\rho_0)
+\tfrac12\operatorname{Tr}(E_1\rho_1)
\quad\text{subject to}\quad
E_0\succeq 0,\ E_1\succeq 0,\ E_0+E_1=I.
\end{equation}
Every \emph{feasible} point, a pair \((E_0,E_1)\) satisfying the
constraints, is a measurement the receiver can actually perform, so
feasible points are strategies, and each one lower-bounds
\(p^{\star}\). For these two states the optimum is
\(p^{\star}=\cos^{2}(\pi/8)\approx 0.854\), achieved by measuring in
the basis that makes an angle of \(\pi/8\) with each of the two
states~\cite{helstrom1969quantum}.

An upper bound on \(p^{\star}\) must hold for every measurement at
once; this is what the dual delivers. Take any Hermitian \(Y\) with
\(Y\succeq\tfrac12\rho_0\) and \(Y\succeq\tfrac12\rho_1\). For every
feasible \((E_0,E_1)\),
\[
\tfrac12\operatorname{Tr}(E_0\rho_0)+\tfrac12\operatorname{Tr}(E_1\rho_1)
\;\le\;
\operatorname{Tr}(E_0Y)+\operatorname{Tr}(E_1Y)
\;=\;
\operatorname{Tr}\bigl((E_0+E_1)Y\bigr)
\;=\;
\operatorname{Tr}(Y),
\]
where the inequality applies the trace fact above to \(E_i\) and
\(Y-\tfrac12\rho_i\). Each such \(Y\) is therefore a certificate: no
measurement whatsoever succeeds with probability above
\(\operatorname{Tr}(Y)\). Minimizing \(\operatorname{Tr}(Y)\) over all
such \(Y\) is itself an SDP, called the \emph{dual}; the original
maximization \Cref{eq:helstrom-primal} is the \emph{primal}. The bound
\(p^{\star}\le\operatorname{Tr}(Y)\) is called \emph{weak duality}.
The primal searches over strategies and pushes the value up from
below; the dual searches over certificates and presses it down from
above. In the example the two meet: the best certificate also has
value \(\cos^{2}(\pi/8)\). This is not special to the example: by
\emph{strong duality}, the two values coincide for every SDP that
satisfies a mild extra condition (Slater's condition), as all SDPs in
this paper do~\cite{boyd2004convex}. The dual is
constructed mechanically from the primal (here \(Y\) corresponds to
the constraint \(E_0+E_1=I\)), and \Cref{app:mf-certificate-checks} states
the dual of our MF program, built in the same way.

In practice, standard software first supplies candidate primal and dual
solutions. One states the primal in a few
lines of code, essentially as written in \Cref{eq:helstrom-primal},
and a solver computes the optimum, to a given accuracy, in time
polynomial in the matrix dimension and the number of constraints. Two
features of the solver output matter here. First, the solver
returns a primal and a dual solution together, so every computed value
arrives with a candidate optimal strategy and a candidate certificate.
Second, the output is floating point and satisfies the constraints only
up to a small tolerance, so on its own it is numerical evidence, not a
proof. For the MF values, we reconstruct explicit algebraic primal and
dual witnesses and verify them in exact arithmetic. The verifier rebuilds
the SDP tester and checks that both witnesses are feasible and have the
same objective value, without calling the solver or reading its
floating-point output (\Cref{app:mf-certificate-checks}). The SDPs in
this paper are small by solver standards: the largest variable, the
MF strategy matrix of \Cref{sec:mf-sdp}, is a \(16\times 16\) matrix.

Our DV-QDB fast-round games fit this three-step pattern. The classical data in a
round (the challenge bits, key bits, basis choices, and measurement
outcomes) take only finitely many values, and the quantum messages are
qubits rather than infinite-dimensional systems, so the whole game can
be written with finite-size matrices. Only the adversary's side is
free. In DF it chooses the response states it returns; in MF it chooses
one two-step strategy, captured by a single PSD matrix whose
normalization conditions are linear. The verifier's acceptance
probability is linear in these variables. The DF and MF problems in the
next section are therefore SDPs, and each solved value comes with a
strategy that achieves it and a certificate that nothing does better.

\section{Reduction of one-round fast-phase games to SDP}
\label{sec:sdp-reduction}

We use the security framework of \Cref{sec:security-model} to determine what the
adversary can do in the one-round DF and MF games, in particular what
information it may use and when. Following~\cite{mauw2018distance}, we
track this information rather than explicit times and locations. For
DV-QDB, these games reduce to exact SDPs. The DF game
becomes a state SDP, since the distant prover only prepares a quantum
response state. The SDP finds the state that maximizes the verifier's
acceptance probability.
The MF adversary, by contrast, is a cooperating pair that we treat as a
single two-step strategy. It pre-asks the honest prover, keeps the
returned quantum system in memory, then processes it together with the
verifier's challenge to produce its response. This strategy is a two-slot
comb, so the MF game becomes a comb SDP that maximizes the verifier's
acceptance probability.

For the DV reductions in this section, we describe a one-round game instance
by a finite set of branches \(\Omega\), each equally likely, so a branch has
weight \(1/|\Omega|\). A branch \(\omega\in\Omega\) is one possible setting of
the round's classical variables. Crucially, some of these variables must stay
hidden from the adversary. Otherwise the adversary is able to produce the honest
prover's response, which the verifier always accepts by completeness
(\Cref{def:completeness}). On branch \(\omega\), the verifier sends the
challenge state \(\rho_\omega\) and receives the returned state \(\sigma\)
(honest or adversarial). The accept operator \(\Pi_\omega\), the verifier's
check, is determined by the branch. The verifier accepts with probability
\(\operatorname{Tr}(\Pi_\omega\sigma)\), which measures how close \(\sigma\) is
to the response it expects on this branch. The bound
\(0\preceq \Pi_\omega\preceq I\), with \(I\) the identity matrix, is what makes this a genuine probability
between \(0\) and \(1\). Without it, \(\Pi_\omega\) would be an arbitrary matrix
rather than a real measurement.

The CV-QDB protocol is handled separately
in \Cref{sec:cvqdb}. We do not formulate a TF SDP. \Cref{sec:tf-sdp}
explains why.

\subsection{Distance-fraud reduction to SDP}\label{sec:df-sdp}

We reduce the DF experiment of \Cref{def:df-experiment} to a single round. Following~\cite{mauw2018distance}, let
$\view(\omega)$ denote the DF response-time view, the information available to
the distant prover $\prover^{\star}$ at the moment it must respond. Typically
the view contains the secret key and everything from the slow phase, but not
the fast-phase challenge state. In distance
fraud, $\prover^{\star}$ lies outside the distance bound, so it must answer
before the challenge state $\rho_\omega$ could even reach it. Its response
therefore cannot depend on $\rho_\omega$, and the returned state depends only
on $\view(\omega)$. This restriction is formalized in~\cite{bogner2026, boureanu2015practical}. The optimal DF success probability is therefore
\begin{equation} \label{eq:df-sdp}
\pDF = \max_{\{\sigma_v\}} \AvgOmega \operatorname{Tr}\bigl(\Pi_\omega \sigma_{\view(\omega)}\bigr)
\end{equation}
subject to
\[
\sigma_v \succeq 0, \qquad \operatorname{Tr}(\sigma_v)=1 \quad \text{for every view } v.
\]
Here $\{\sigma_v\}$ is the prover's strategy, with $\sigma_v$ the state it returns when it has view $v$. The two constraints make every $\sigma_v$ a valid quantum state, positive semidefinite and with trace one, so the prover could actually prepare it. On branch $\omega$, the prover has view $v = \view(\omega)$ and returns $\sigma_{\view(\omega)}$, which the verifier accepts with probability $\operatorname{Tr}(\Pi_\omega \sigma_{\view(\omega)})$. Averaging this over all branches and choosing the best response states gives the optimal DF success probability $\pDF$.

\begin{lemma}[Reduced distance-fraud value]
For any reduced one-round DF instance of a DV-QDB protocol in the model above, the optimal DF value is given by \Cref{eq:df-sdp}.
\end{lemma}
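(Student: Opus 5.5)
The argument shows two inequalities: every admissible one-round DF strategy achieves at most the value of \Cref{eq:df-sdp}, and every feasible point of \Cref{eq:df-sdp} is realized by an admissible strategy. The game value is the supremum of the verifier's acceptance probability over all distant-prover strategies that respect the timing model. So we first describe the most general such strategy and then compute its success probability.

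\emph{Upper bound.} By the timing restriction of the DF experiment (\Cref{def:df-experiment}, formalized in~\cite{bogner2026,boureanu2015practical}), the distant prover $\prover^{\star}$ must emit its response before $\rho_\omega$ can reach it. Its response is therefore produced by some physical process acting only on its internal state and on the classical information $\view(\omega)$ it holds at response time. This process may be randomized, and it may use an internal quantum memory that stays entangled with the returned system. The first step is to eliminate this generality. Conditioned on view $v$, we take the reduced state of the returned system and call it $\sigma_v$. Randomness conditioned on $v$ only produces a convex mixture, so $\sigma_v$ is still a density matrix. Because the verifier's check $\Pi_\omega$ acts only on the returned system, the acceptance probability on branch $\omega$ is $\operatorname{Tr}\bigl(\Pi_\omega\sigma_{\view(\omega)}\bigr)$: the marginal suffices, and any entanglement with the kept memory is irrelevant. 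Averaging over the equally likely branches reproduces the objective of \Cref{eq:df-sdp} evaluated at a feasible $\{\sigma_v\}$. Hence the game value is at most $\pDF$.

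\emph{Achievability.} For any feasible $\{\sigma_v\}$, the prover reads its view $v$ and prepares $\sigma_v$. This is an admissible strategy because it uses no information about $\rho_\omega$, and its success probability is exactly the objective. The feasible set is a product of compact sets of density matrices and the objective is linear and hence continuous, so the maximum is attained. The game value therefore equals $\pDF$.

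The main obstacle is the upper-bound step. There we must argue carefully that the timing model really makes the returned state a function of $\view(\omega)$ alone. In particular, the prover's internal randomness and any hidden branch variables must be independent of the components of $\omega$ that are absent from $v$, so that conditioning on $v$ is well defined. If that is unclear, I would first state the response as a map $v\mapsto$ (ensemble of states) and use linearity of $\operatorname{Tr}(\Pi_\omega\,\cdot)$ to collapse each ensemble to its average. This is exactly the mixing argument from \Cref{sec:sdp-prelim}.
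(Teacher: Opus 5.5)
Your proposal is correct and follows essentially the same route as the paper: the timing restriction makes the returned state a function of the view $v$ alone, the averaged acceptance probability is then the objective of \Cref{eq:df-sdp}, and conversely any feasible family is realized by preparing $\sigma_v$ on view $v$. Your additional points (internal memory handled by taking the marginal, randomness handled by convexity, independence of the hidden branch variables from the prover's state, and attainment of the maximum by compactness) fill in details the paper leaves implicit but do not change the argument.
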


\begin{proof}
Fix a response-time view $v$. By the timing restriction, the returned system cannot depend on the challenge state $\rho_\omega$. It is therefore described by some density operator $\sigma_v$. On branch $\omega$, the verifier accepts the returned state with probability $\operatorname{Tr}(\Pi_\omega \sigma_{\view(\omega)})$; averaging these acceptance probabilities over the equally likely branches gives the attack's success probability, which is exactly the objective of \Cref{eq:df-sdp}. Conversely, any feasible family $\{\sigma_v\}$ is physically realizable, since on observing view $v$ the prover can prepare $\sigma_v$. Optimizing over feasible density operators therefore yields the optimal DF value.
\end{proof}

\subsection{Mafia-fraud reduction to SDP}\label{sec:mf-sdp}
We reduce the MF experiment of \Cref{def:mf-experiment} to a single round. In that experiment, the adversary is the cooperating pair \((\adversary_1,\adversary_2)\), located near
the verifier and prover, respectively. In the one-round SDP game we replace the pair
\((\adversary_1,\adversary_2)\) by a single adversary \(S\) that acts in two
steps and uses the same strategy on every branch. Formally, \(S\) is a
branch-independent two-slot comb.

The first step is the pre-ask, the learning phase of \Cref{def:mf-experiment}.
The adversary chooses a probe system \(Q\) and
sends it to the honest prover, which returns a response \(R\). This reply maps
the input \(Q\) to the output \(R\), so we describe it by a quantum channel
\[
\mathcal N_\omega:\mathcal L(Q)\to\mathcal L(R),
\]
the most general physical process taking one quantum system to another;
here \(\mathcal L(Q)\) denotes the matrices acting on register \(Q\). The
channel is branch-dependent because the honest prover uses protocol variables
that differ from branch to branch. The adversary, by contrast,
does not know \(\omega\), so the same comb \(S\) is used on every branch.
The second step mirrors the challenge session of \Cref{def:mf-experiment}. The
verifier sends the challenge state \(\rho_\omega\) on a register \(C\) (a system
carrying one message), and the adversary returns its output on a register
\(O\). This output depends on the challenge \(\rho_\omega\) together with the
pre-ask systems \(Q\) and \(R\). The comb \(S\) is defined on the fixed
registers \(Q,R,C,O\), and the branch \(\omega\) selects which challenge state
arrives on \(C\). \Cref{fig:mf-two-slot-comb} shows the comb strategy \(S\) and its interactions with the prover and verifier.

\begin{figure}[t]
\centering
\begin{tikzpicture}[font=\footnotesize]

  \node[qdbactor] (V) at (-3.45,0)
    {Verifier};
  \node[font=\footnotesize\bfseries] (A) at (0,0)
    {Adversary};
  \node[qdbactor] (P) at (3.45,0)
    {Prover};

  \draw[qdblife] (V.south) -- (-3.45,-2.70);
  \draw[qdblife] (P.south) -- (3.45,-2.70);
  \draw[line width=1.2pt] (-3.70,-2.70) -- (-3.20,-2.70);
  \draw[line width=1.2pt] (3.20,-2.70) -- (3.70,-2.70);
  \draw[qdbphase, rounded corners=2pt]
    (-1.10,-0.42) rectangle (1.10,-2.70);
  \node[align=center] at (0,-1.56)
    {comb \(S\)};

  \draw[qdbflow]
    (1.10,-0.95)
    -- node[above] {\(Q\)}
    (3.45,-0.95);

  \draw[qdbflow]
    (3.45,-1.22)
    -- node[below] {\(R\)}
    (1.10,-1.22);
  \node[anchor=west] at (3.60,-1.09)
    {\(\mathcal N_\omega\)};

  \draw[qdbflow]
    (-3.45,-2.08)
    -- node[above] {\(C:\rho_\omega\)}
    (-1.10,-2.08);

  \draw[qdbflow]
    (-1.10,-2.35)
    -- node[below] {\(O\)}
    (-3.45,-2.35);
  \node[anchor=east] at (-3.60,-2.22)
    {\(\Pi_\omega\)};

\end{tikzpicture}
\caption{Reduced mafia-fraud SDP game. The adversary's comb \(S\) first
interacts with the prover via the pre-ask channel \(\mathcal N_\omega\), then
answers the verifier's challenge \(\rho_\omega\). The branch-dependent triple
\((\mathcal N_\omega,\rho_\omega,\Pi_\omega)\), with \(\Pi_\omega\) the
verifier's accept operator, combines into a single tester operator
\(W_\omega\) that captures everything outside the adversary's control on
branch \(\omega\). The acceptance probability is then
\(\operatorname{Tr}(W_\omega S)\), a linear function of the comb \(S\); this
linearity is what makes the optimization an SDP.}
\label{fig:mf-two-slot-comb}
\end{figure}

To turn the game into an SDP, we now put the comb \(S\) and the channels in
matrix form, using the unnormalized Choi convention
\[
J(\mathcal E)
=
\sum_{i,j}|i\rangle\langle j|\otimes
\mathcal E(|i\rangle\langle j|)
\]
with respect to the input computational basis. This represents every quantum
channel as a single positive semidefinite matrix on the tensor product of its
input and output spaces, and it turns the comb \(S\) into an operator on
\(Q\otimes R\otimes C\otimes O\). The MF acceptance probability then becomes a
linear function of \(S\), which is exactly what an SDP optimizes. The
comb-normalization constraints on \(S\) are imposed in the SDP below.

For each branch \(\omega\), the three branch-dependent objects
\((\mathcal N_\omega, \rho_\omega, \Pi_\omega)\) combine into a single tester
operator \(W_\omega\). The tester is the protocol-side counterpart of the
adversary's comb. It packages everything outside the adversary's control on
branch \(\omega\), and pairs with \(S\) via the trace so that
\(\operatorname{Tr}(W_\omega S)\) is exactly the acceptance probability. With
the Choi convention above, the tester reads
\[
W_\omega
=
J(\mathcal N_\omega)^{\mathsf T}_{QR}
\otimes
\left(\rho_\omega\right)^{\mathsf T}_{C}
\otimes
\Pi_{\omega,O},
\]
where transposes are taken in the computational basis of each register.

Thus, the optimal MF success probability is
\begin{equation}
\label{eq:mf-sdp}
\pMF
=
\max_{S,\eta}
\AvgOmega
\operatorname{Tr}(W_\omega S),
\end{equation}
subject to
\[
S \succeq 0,
\qquad
\operatorname{Tr}_{O}(S)=\eta \otimes I_{RC},
\qquad
\eta \succeq 0,
\qquad
\operatorname{Tr}(\eta)=1.
\]
Here \(\eta\) is the probe quantum state that the adversary prepares on
register \(Q\) and sends to the honest prover in the pre-ask interaction,
and \(\operatorname{Tr}_{O}\) is the partial trace over \(O\): it
discards register \(O\) and returns the state of the remaining
registers.
The constraints above are the standard comb normalization conditions; they
ensure that \(S\) describes a physically realizable sequential strategy in
the order \(Q,R,C,O\).

To prove each MF value, we use a companion dual SDP whose feasible points
upper-bound \(\pMF\). For every reported instance, explicit primal and dual
witnesses have the same algebraic objective value, which pins \(\pMF\) to that
value. Their feasibility and objective equality are rechecked in exact
arithmetic, so the proof does not rely on the solver. Details are in
\Cref{app:mf-certificate-checks}.

\begin{lemma}[Reduced mafia-fraud value]
For any reduced one-round MF instance of a DV-QDB protocol in the model above,
the optimal MF value is given by \Cref{eq:mf-sdp}.
\end{lemma}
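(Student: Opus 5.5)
We prove the lemma by showing two inclusions. First, every one-round MF strategy of the pair \((\adversary_1,\adversary_2)\) yields a feasible point of \Cref{eq:mf-sdp} with the same success probability. Second, every feasible \((S,\eta)\) is implemented by some admissible adversary. Throughout we rely on the quantum comb (process-tensor) characterization of \cite{chiribella2009theoretical,gutoski2007toward}. It says that a positive operator \(S\) on \(Q\otimes R\otimes C\otimes O\) is the Choi operator of a sequential network "prepare \(Q\) together with a memory \(M\); then, on receiving \(R\) and \(C\), apply a channel from \(RCM\) to \(O\)" exactly when \(\operatorname{Tr}_O S=\eta\otimes I_{RC}\) with \(\eta\) a state on \(Q\). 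This fact also drives the converse direction: given feasible \((S,\eta)\), we purify \(\eta\) into \(QM\), send \(Q\) to the prover as the pre-ask, and realize the second tooth by a channel \(RCM\to O\) obtained from \(S\) via the standard Stinespring construction. This strategy uses no knowledge of \(\omega\), so it is admissible on every branch.

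For the first direction, we take an arbitrary pair strategy in the one-round game and argue, as in the DF lemma, from the information available at response time. In the reduced round, the only interactions are the pre-ask with the honest prover, modelled by \(\mathcal N_\omega\), and the verifier's timed challenge on \(C\). The timing model says the response on \(O\) may depend on pre-ask information but not conversely. So the causal order is \(Q\to R\to C\to O\), and nothing the adversary sends to the prover can depend on \(\rho_\omega\). Since \(\adversary_1\) and \(\adversary_2\) share an authenticated quantum channel, we can merge their local operations and pre-shared entanglement into one memory register \(M\). The composite is therefore a single sequential network of the comb type above. Its Choi operator \(S\) is branch-independent, because neither party knows \(\omega\), and \(S\) satisfies the normalization constraints. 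This merging step is exactly where the QPV nonconvexity disappears.

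It remains to check that the acceptance probability on branch \(\omega\) equals \(\operatorname{Tr}(W_\omega S)\). We compute this with the link product: composing the Choi operators \(J(\mathcal N_\omega)\), \(\rho_\omega\) and \(\Pi_\omega\) with \(S\) over the shared registers \(Q,R,C,O\) gives
\[
\operatorname{Tr}\bigl[(J(\mathcal N_\omega)^{\mathsf T}_{QR}\otimes(\rho_\omega)^{\mathsf T}_C\otimes\Pi_{\omega,O})\,S\bigr],
\]
where the transposes come from the unnormalized Choi convention and are taken in the computational basis. Averaging over the equally likely branches recovers the objective of \Cref{eq:mf-sdp}. Combining both directions, the supremum over admissible attacks equals the SDP maximum. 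The maximum is attained because the feasible set is compact: \(\operatorname{Tr} S=\dim(RC)\) and \(S\succeq0\).

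I expect the main obstacle to be the first direction, namely justifying that the distributed pair collapses to one comb. Concretely, one must argue that in the single reduced round there is no causal constraint preventing \(\adversary_1\) from using everything \(\adversary_2\) learned from the pre-ask. The justification is the pre-ask timing assumption of \Cref{sec:assumptions}: the prover's reply reaches \(\adversary_1\) before the challenge. One must also check that relaying leaves no residual non-signalling constraint between the two parties.
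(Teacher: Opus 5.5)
Your proposal is correct and follows the same route as the paper: it identifies one-round MF attacks with branch-independent two-slot combs obeying $\operatorname{Tr}_O(S)=\eta\otimes I_{RC}$ via the characterization of \cite{gutoski2007toward,chiribella2009theoretical}, obtains each branch's acceptance probability as $\operatorname{Tr}(W_\omega S)$, and uses the comb realization theorem for the converse. Your version adds details the paper leaves implicit: the link-product origin of the transposes in $W_\omega$, the purification-based realization of a feasible $(S,\eta)$, attainment of the maximum by compactness, and why the pair $(\adversary_1,\adversary_2)$ collapses to a single comb under the pre-ask timing assumption.
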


\begin{proof}
An MF adversary first prepares a probe, receives the honest-prover
reply, later receives the verifier challenge, and then outputs its response.
Such sequential strategies are exactly deterministic combs on the
registers \(Q,R,C,O\) satisfying the constraints in \Cref{eq:mf-sdp}. For each
branch \(\omega\), the honest-prover channel, challenge state, and accept
operator combine into the tester \(W_\omega\), and pairing it with the comb
\(S\) gives the branch acceptance probability \(\operatorname{Tr}(W_\omega S)\).
The reverse also holds. Every feasible \(S\) corresponds to a real adversary
strategy~\cite{gutoski2007toward,chiribella2009theoretical}. Averaging over branches gives \Cref{eq:mf-sdp}.
\end{proof}

\subsection{Why TF is not reduced to SDP}\label{sec:tf-sdp}
Our SDPs model only a single fast round, which is enough for the DF and MF
games studied here. We do not formulate a TF SDP, because TF security is about
what happens across multiple runs: whether a malicious prover's help can be
reused in a fresh execution. A single-round SDP cannot capture that.

\section{Applying SDP to quantum distance bounding}
\label{sec:applications}

We now instantiate the one-round DF and MF SDPs of
\Cref{sec:sdp-reduction} for each QDB protocol. We treat the prepare-and-measure protocol
QDB 2019~\cite{abidin2019quantum}, the
entanglement-based protocols (Mutual QDB~\cite{abidin2025entanglement},
E91 QDB~\cite{bogner2024entangled}), and the CV
protocol~\cite{bogner2025continuous}. We list only the SDP inputs; protocol-flow diagrams are in
\Cref{app:protocol-flows}. For each DV-QDB protocol,
\Cref{tab:dv-instantiations} lists the branch set, the DF response-time view,
the verifier challenge state, the acceptance operator, the honest-prover
pre-ask channel, and the resulting one-round values. These values
should be read under the assumptions stated in
\Cref{sec:assumptions}.

One early protocol is deliberately absent from this benchmark. QDB
2017~\cite{abidin2017towards} follows the classical design of Brands
and Chaum~\cite{brandschaum1994distance}: its timed fast phase only
checks that the prover is close, and a final MAC over the fast-phase
transcript checks that it is genuine. The one-round game strips that
MAC, and with it the very mechanism that provides MF resistance, so
the resulting value (\(\pMF=1\)) reflects the missing MAC rather than
the protocol. The same holds for any QDB protocol that binds its fast phase
with a later authenticated message. The benchmark below therefore
covers the protocols whose fast phase itself authenticates the prover;
\Cref{app:qdb2017} details the QDB 2017 reduction and its
fast-phase-only values.

Throughout this section, we use the following notation:
\begin{itemize}
\item \(a,b\) denote fast-phase preparation or measurement bases, \(c\) a
challenge bit, \(m\) the verifier's measurement outcome in the
entanglement-based and CV protocols, and \(r\) a masking value.
\item The accept operator \(\Pi_\omega\) and channel \(\mathcal N_\omega\)
carry the full branch \(\omega\) as a subscript. For each protocol we instead
index them by the relevant variables, writing for instance \(\mathcal N_a\) if
the channel depends only on \(a\), and \(\mathcal N_{a,b}\) if it depends on
both \(a\) and \(b\).
\item For the BB84-style prepare-and-measure bases, we write
\(|u\rangle_a:=H^a|u\rangle\) for \(u,a\in\{0,1\}\), where \(H\) is the
Hadamard transform. Thus \(a=0\) gives the computational basis and \(a=1\)
the Hadamard basis.
\item For E91 QDB, the verifier and prover settings \(a_v\) and \(a_p\) each
select one of three measurement bases (\(X,W,Z\) for the verifier and
\(W,Z,V\) for the prover, defined in the E91 QDB subsection below), rather
than a BB84 basis. The response basis \(b\) is still BB84-style.
\item For the entanglement-based protocols, \(|\chi_{a,m}\rangle\) is the
reduced-game challenge state. When the verifier measures its half of the
entangled pair in basis \(a\) and gets outcome \(m\), the prover's half
collapses to \(|\chi_{a,m}\rangle\). For the perfectly correlated
\(|\Phi^{+}\rangle\) pairs of Mutual QDB, this is \(|m\rangle_a\). For the
perfectly anti-correlated singlet pairs of E91 QDB, it is the \(a_v\)-basis
state orthogonal to \(|m\rangle_{a_v}\).
\end{itemize}

\begin{table}[t]
\centering
\footnotesize
\setlength{\tabcolsep}{3pt}
\caption{DV-QDB reduced one-round SDP instantiations. All bit variables are
uniformly distributed, and all sums over the prover's measurement outcome \(t\) are over \(\{0,1\}\). In the
\(\rho_\omega\to\Pi_\omega\) column, a ket \(|\psi\rangle\) stands for
\(|\psi\rangle\langle\psi|\) (a pure state on the left of the
arrow, an accept projector on the right). For E91 QDB, the row covers only the value-check pairs
\((a_v,a_p)\in\{(1,0),(2,1)\}\); the other settings are used for CHSH testing
or left unchecked.}
\label{tab:dv-instantiations}
\begin{tabular}{@{}lcccc@{}}
\toprule
Protocol
& \(\omega;\,\view(\omega)\)
& \(\rho_\omega \to \Pi_\omega\)
& \(\mathcal N_\omega(\rho)\)
& \((\pDF,\pMF)\) \\
\midrule
QDB 2019
& \((a,b,c);\, (a,b)\)
& \(|c\rangle_a \to |c\rangle_b\)
& \(\mathcal N_{a,b}(\rho)=
   \sum_t \langle t|_a\rho|t\rangle_a
   |t\rangle_b\langle t|\)
& \((0.5000,\,0.9045)\)
\\
Mutual QDB
& \((a,b,r,m);\, (a,b,r)\)
& \(|\chi_{a,m}\rangle \to |m\oplus r\rangle_b\)
& \(\mathcal N_{a,b,r}(\rho)=
   \sum_t \langle t|_a\rho|t\rangle_a
   |t\oplus r\rangle_b\langle t\oplus r|\)
& \((0.5000,\,0.7500)\)
\\
E91 QDB
& \((a_v,a_p,b,m);\, (a_p,b)\)
& \(|\chi_{a_v,m}\rangle \to |\tilde m\rangle_b\)
& \(\mathcal N_{a_p,b}(\rho)=
   \sum_t \langle t|_{a_p}\rho|t\rangle_{a_p}
   |t\rangle_b\langle t|\)
& \((0.5000,\,0.9332)\)
\\
\bottomrule
\end{tabular}
\end{table}

Two patterns stand out from the DV-QDB values in
\Cref{tab:dv-instantiations}. First, the DF value is \(1/2\) for every DV-QDB protocol considered
here, so DF does not distinguish them. Second, MF provides the more meaningful
comparison. A protocol has a high MF value when the adversary can extract
information from the pre-ask that helps it answer the verifier's later
challenge, and a lower value when it cannot. The DV MF values are SDP-certified by matching primal (lower-bound) and dual
(upper-bound) witnesses; details are in \Cref{app:mf-certificate-checks}.

\subsection{QDB 2019}

QDB 2019~\cite{abidin2019quantum}, the most studied QDB protocol, uses
independent bases for the verifier's challenge \(a\) and the prover's response
\(b\). The protocol flow is shown in \Cref{fig:qdb2019-flow}. The instantiation
in \Cref{tab:dv-instantiations} gives \(\pMF\approx0.9045\) (certified in
\Cref{app:mf-certificate-checks}).

The previously reported per-round attack~\cite{verschoor2022quantum} attains
\(\pMF=0.875\) by reducing the pre-ask phase to a classical estimate of the
relation between \(a\) and \(b\). Our SDP attack is stronger. Instead of
collapsing the pre-ask to such an estimate, the adversary keeps its quantum
state intact and processes it jointly with the later challenge.

\subsection{Mutual QDB}

Mutual QDB~\cite{abidin2025entanglement} is an entanglement-based protocol in
which the verifier and prover authenticate each other. The verifier distributes
EPR pairs, and the prover returns measurement outcomes masked by a committed
value \(r\). The protocol flow is shown in \Cref{fig:mutualqdb-flow}. We analyze
only the first timed return leg, since our model covers a single fast round. The
second leg and the commitment opening are out of scope. The instantiation in
\Cref{tab:dv-instantiations} gives \(\pMF=0.75\).

The previously reported per-round attack~\cite{abidin2025entanglement} attains
\(\pMF=0.625\). Our SDP attack is stronger, and its value follows from a simple
basis-matching strategy. The adversary pre-asks the prover with
\(|\hat m\rangle_{\hat a}\) and stores the returned state. If \(\hat a=a\), the
stored response is \(|\hat m\oplus r\rangle_b\).
The later verifier challenge is the entangled-half state carrying the fresh
outcome \(m\) in basis \(a\), so the adversary can measure it in basis \(\hat a\) to learn \(m\). It
then flips the stored response iff \(m\ne\hat m\), using a Pauli \(Y\) operation,
which flips the logical bit in both BB84 bases up to phase, and obtains
\(|m\oplus r\rangle_b\). If \(\hat a\ne a\), the same procedure is uncorrelated
with the fresh outcome and succeeds with probability \(0.5\). This gives
\(0.5\cdot1+0.5\cdot0.5=0.75\), matching the SDP value.

\subsection{E91 QDB}

E91 QDB~\cite{bogner2024entangled} is a Bell-inequality-based entanglement
protocol. The verifier prepares an entangled pair, measures its own half in
setting \(a_v\), and sends the other half to the prover. The prover measures it
in setting \(a_p\) and returns the outcome. Settings \(0,1,2\) select the bases
\(X, W, Z\) for the verifier and \(W, Z, V\) for the prover, where \(W\) and
\(V\) are the eigenbases of the observables
\((Z+X)/\sqrt{2}\) and \((Z-X)/\sqrt{2}\). Their Bloch-sphere directions
lie at \(+\pi/4\) and \(-\pi/4\) from \(Z\), so their basis vectors are
rotated by \(\pm\pi/8\) from the computational-basis vectors. The two matching-basis
pairs \((a_v, a_p) \in \{(1,0), (2,1)\}\) are then exactly the rounds in which
verifier and prover measure in the same basis (\(W\) and \(Z\), respectively),
and these are used for value checks. The other settings serve a CHSH
non-locality test or are unchecked. The protocol flow is shown in
\Cref{fig:e91qdb-flow}. We analyze only the value-check pairs.
The instantiation in \Cref{tab:dv-instantiations} gives
\(\pMF \approx 0.9332\). To our knowledge, both the DF value
(\(\pDF=0.5\)) and the MF value reported here are the first per-round values
for this protocol.

For the value-check pairs, we use the singlet anti-correlation convention of
the SDP instantiation. The target outcome is \(\tilde m_i = 1 - m_i\), with
\(m_i\) the verifier's measurement outcome. The earlier value-matching
convention of~\cite{bogner2024entangled} sets \(\tilde m_i = m_i\). The two
conventions are equivalent up to a relabeling of the prover's classical bit.

The high MF value comes from the pre-ask interaction. The prover measures
the adversary's probe in basis \(a_p\) and re-encodes the outcome in the
response basis \(b\), so the returned state carries strong information
about both bases. The adversary processes this state jointly with the
fresh challenge state in basis \(a_v\) to produce the accepted response.

\subsection{Continuous-variable QDB}\label{sec:cvqdb}

The CV-QDB protocol we consider is the Gaussian protocol
of~\cite{bogner2025continuous}, shown in \Cref{fig:cvqdb-flow}. Unlike the
DV-QDB protocols, its acceptance test is threshold-based. The verifier accepts a
round when the returned value lies within a tolerance \(\delta_{\mathrm{cv}}\)
of its reference value. In the DV protocols the binary
response lets a distant prover succeed at least half the time, so the DF value
never drops below \(0.5\). Here the returned value must instead land close to a
continuous target, so a distant prover often misses and the DF value can fall
below \(0.5\). A tighter \(\delta_{\mathrm{cv}}\) lowers it further. An exact
optimum is out of reach here. The CV state space is infinite-dimensional, so
the attack optimization does not reduce to a finite-dimensional SDP as it does
for the DV protocols. We therefore restrict to Gaussian attacks and report the
estimators \(\pDFCV\) and \(\pMFCV\). To our knowledge, these are the first
per-round DF and MF values for this protocol.

The DV-QDB values are computed in a noiseless setting. To make the
CV estimator comparable, we calibrate the parameters to an honest one-round
completeness of about \(0.99\). The squeezing \(r_{\rm sq}\) controls the
strength of the EPR correlation between the verifier's and prover's modes, so a
larger value ties the honest response more tightly to the reference and raises
completeness. The \(\hbar=2\) convention fixes the vacuum variance at \(1\), so
the quadratures and the acceptance threshold are expressed in units of the
vacuum (shot) noise. At this operating point we use \(r_{\rm sq}=1.2\) and
\(\delta_{\mathrm{cv}}=1.1\), with zero added noise in the measured response
quadrature. \Cref{fig:cv-calibration-sweep} shows how completeness and the DF
and MF values respond as these two parameters vary around the operating point.
Panel~(a) sweeps the threshold and panel~(b) the squeezing.

\subsubsection{CV distance fraud}
In the DF game, the distant prover must answer before the challenge mode
arrives, so its response cannot depend on the verifier's reference value. For
each branch \((a,b)\), let \(m_i\) be the verifier's reference homodyne value
and let \(\operatorname{Var}_{\mathrm{resp}}(b)\) be the smallest response
variance the prover can achieve in basis \(b\). Because the response and the
reference are independent, the error variance is the sum of the two,
\[
\mathrm{MSE}^{\mathrm{DF}}_{a,b}
=
\operatorname{Var}(m_i)+\operatorname{Var}_{\mathrm{resp}}(b).
\]
A round passes when the returned value lies within \(\delta_{\mathrm{cv}}\) of
the reference. With a Gaussian error, each branch passes with an
error-function probability, and averaging over the four equally likely branches
gives
\[
\pDFCV
=
\frac14\sum_{a,b}
\operatorname{erf}\left(
\frac{\delta_{\mathrm{cv}}}
{\sqrt{2\,\mathrm{MSE}^{\mathrm{DF}}_{a,b}}}
\right).
\]
At the calibrated operating point this gives \(\pDFCV \approx 0.3592\).

\subsubsection{CV mafia fraud}
The CV MF model has the same two-slot structure as the DV MF, a pre-ask
interaction followed by the verifier's challenge. The adversary sends a Gaussian
probe to the honest prover and measures the returned mode. It then combines this
pre-ask information with a measurement of the verifier's later challenge mode to
estimate the verifier's target value. Unlike the DV MF, the CV version has no
exact comb SDP. A Gaussian attack is fixed by a few continuous parameters, so we
scan a finite grid of these parameters and keep the strongest attack. We add no
response noise, an attacker-favorable choice that can only raise the estimated
attack success. We bound the probe energy by \(N_{\rm pre}\le2\), since an
unbounded probe could read out the hidden basis and pass trivially. At the calibrated operating point
this gives \(\pMFCV(N_{\rm pre}\le 2) \approx 0.9138\). This value is a
finite-grid Gaussian estimator, not a certified finite-energy attack
probability. The grid and estimator details are in \Cref{app:cv-grid-benchmark}.

\begin{figure}[tbp]
  \centering
  \includegraphics[width=\linewidth]{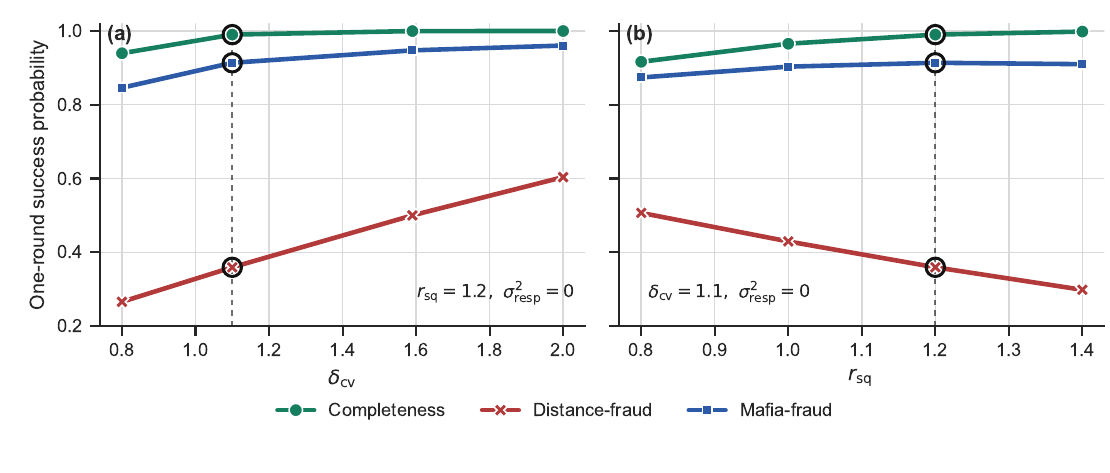}
  \caption{CV-QDB calibration sweeps with zero added response noise
  \(\sigma_{\mathrm{resp}}^2=0\). Panel~(a)
  varies the threshold at fixed \(r_{\rm sq}=1.2\), while panel~(b) varies the
  squeezing at fixed \(\delta_{\mathrm{cv}}=1.1\). Mafia-fraud values are finite-grid
  Gaussian estimators recomputed for each parameter point. Black outlines
  identify the calibrated operating point used in
  \Cref{tab:fast-phase-summary}.}
  \label{fig:cv-calibration-sweep}
\end{figure}

\section{Conclusion}\label{sec:conclusion}
We introduced a uniform one-round fast-phase framework for QDB and showed that
the DV-QDB DF and MF games can be solved exactly as semidefinite programs, a
state SDP for DF and a two-slot comb SDP for MF. Unlike in QPV, where attack
optimization is nonconvex and analyses rely on relaxations, these one-round
games are convex, so the reported DV values are exact rather than bounds.
This gives a single way to compare protocols across families, rather than a
separate attack tailored to each one. For QDB 2019 and the first timed return leg of Mutual QDB, the SDP
yields higher one-round MF values than previously reported. For E91 QDB and CV-QDB, we report the
first per-round DF and MF values.

For every DV-QDB protocol, \(\pDF = 0.5\), so DF does not distinguish between
them. The MF values instead range from \(0.75\) to \(0.9332\). What matters for MF
resistance is not whether a protocol is prepare-and-measure or
entanglement-based, but whether an early interaction with the prover can be
converted into a valid response to a fresh verifier challenge.
Protocols in the style of Brands and Chaum, such as QDB 2017, whose MF
resistance comes from a final MAC rather than from the fast phase
itself, fall outside this comparison (\Cref{app:qdb2017}).

The CV-QDB results point in the same direction, with one caveat. Because
acceptance is threshold-based, the values \(\pDFCV\) and \(\pMFCV\) depend on
the chosen squeezing, threshold, energy bound, grid, estimator class, and noise
model.

These results are reduced one-round fast-phase values, not full-protocol
fraud probabilities. A full-protocol adversary attacks all \(n\) rounds
with a single strategy. It can keep quantum memory between rounds, adapt
to earlier rounds, and interleave its pre-asks and responses, so the
\(n\)-round fraud probability cannot simply be assumed to be the
one-round value raised to the power \(n\). How repeated-game values
relate to one-round values is
the parallel-repetition question, which is subtle even for the
most-studied games. The repeated value can decay more slowly than the
\(n\)-th power~\cite{raz2011counterexample}, and for games with entangled
players the known general bounds are weaker
still~\cite{yuen2016parallel}. The QDB security
framework~\cite{bogner2026} provides the multi-round side of the
analysis; one route there is a concentration bound that tolerates
adaptive adversaries and turns a per-round success cap, if one holds
conditionally on the adversary's history, into a soundness bound for
threshold acceptance. The exact one-round values reported here are the
natural per-round constants for such a multi-round lifting, and for
QDB 2019 the stronger attack of \Cref{tab:fast-phase-summary} updates
the per-round value used there. Whether these one-round values cap the
adversary's success in every round of a full execution is a
protocol-by-protocol question beyond our one-round scope.

A natural next step is to feed these per-round values into a
secure-ranging-rate framework for QDB, analogous to a secure key rate in
quantum key distribution. Such a framework would combine \(\pDF\), \(\pMF\),
\(\pDFCV\), and \(\pMFCV\) with honest completeness, timing statistics,
multi-round thresholds, post-fast-phase checks, and the threat model to
quantify the rate at which secure ranging decisions can be certified.

\section*{Funding}

This work was supported by CyberSecurity Research Flanders with
reference number VR20192203, and by the European Union through the Horizon
Europe project \emph{Quantum Secure Networks Partnership} (QSNP, grant
agreement No. 101114043) and the Connecting Europe Facility project
BENELUX-QCI (grant agreement No. 101249520).

\section*{Code and data availability}
The code and data are available at
\url{https://github.com/kevinbogner/qdb-benchmark} and archived at
\url{https://doi.org/10.5281/zenodo.21361445}. The repository contains the
code that constructs the finite-dimensional DV-QDB SDP instances, solves the
primal and dual programs, and runs the restricted CV-QDB finite-grid
estimator, together with the scripts and parameters used in the paper. It
also contains explicit primal and dual witnesses and an exact-arithmetic
verifier for the four DV-QDB MF values. Saved floating-point solver outputs
are included as numerical reproducibility diagnostics.

\bibliographystyle{amsalpha}
\begingroup
\hbadness=10000
\bibliography{references}
\endgroup

\clearpage
\appendix
\crefalias{section}{appendix}
\crefname{appendix}{Appendix}{Appendices}
\Crefname{appendix}{Appendix}{Appendices}

\section{DV mafia-fraud certificate checks}
\label{app:mf-certificate-checks}

To check the reported MF values without trusting the SDP solver, we certify
each one with a matching pair of bounds. The primal SDP in \Cref{eq:mf-sdp}
maximizes the acceptance probability over all attack combs, so every feasible
comb \(S\) is an explicit attack, and its acceptance probability is a lower
bound on \(\pMF\). The upper bound needs the dual.

Write the averaged tester as
\[
\bar W:=\frac{1}{|\Omega|}\sum_{\omega\in\Omega} W_\omega,
\]
the mean of the per-branch acceptance operators over the branch set
\(\Omega\). With a Hermitian variable \(Y\) on \(Q\otimes R\otimes C\) and a
scalar \(\mu\), the dual is
\[
\min_{Y,\mu}\ \mu
\]
subject to
\[
Y\otimes I_O \succeq \bar W,
\qquad
\mu I_Q \succeq \operatorname{Tr}_{RC}(Y),
\qquad
Y=Y^\dagger,\quad \mu\in\mathbb R .
\]
The first constraint forces \(Y\) to dominate the averaged tester, and the
second caps \(\operatorname{Tr}_{RC}(Y)\) on the probe register \(Q\) by
\(\mu\). Together they give \(\operatorname{Tr}(\bar W S)\le\mu\) for every comb
\(S\). No attack then accepts with probability above \(\mu\), so any
dual-feasible \((Y,\mu)\) upper-bounds \(\pMF\).

A feasible primal and a feasible dual bound \(\pMF\) from below and above.
If their objective values are the same number \(v\), weak duality gives
\(v\leq\pMF\leq v\), and hence \(\pMF=v\). The artifact supplies such a
pair for every row of \Cref{tab:mf-exact-certificates}.

The exact verifier independently rebuilds \(\bar W\) from the finite branch
description of each protocol. It checks
\[
S\succeq0,\quad
\operatorname{Tr}_{O}(S)=\eta\otimes I_{RC},\quad
\eta\succeq0,\quad
\operatorname{Tr}(\eta)=1
\]
for the primal witness, both dual slack inequalities above, and the exact
identity \(\operatorname{Tr}(\bar W S)=\mu\). Matrix positivity is proved by
exact LDL decompositions. QDB 2017 and Mutual QDB use rational arithmetic,
QDB 2019 uses \(\mathbb Q(\sqrt5)\), and E91 QDB uses
\(\mathbb Q(\sqrt2,\sqrt{24+17\sqrt2})\). In the last field, algebraic signs
are decided with rational square-root enclosures. The verifier neither calls
an SDP solver nor reads the stored floating-point certificates.

\begin{table}[H]
\centering
\small
\caption{Exact DV MF values. For each row, the artifact verifies a feasible
primal and dual witness over the stated field with identical objective value.}
\label{tab:mf-exact-certificates}
\begin{tabular}{@{}lcc@{}}
\toprule
Protocol & Exact \(\pMF\) value & Certificate field\\
\midrule
QDB 2017 & \(1\) & \(\mathbb Q\)\\
QDB 2019 & \((5+\sqrt5)/8=\sin^{2}(2\pi/5)\) & \(\mathbb Q(\sqrt5)\)\\
Mutual QDB, first leg & \(3/4\) & \(\mathbb Q\)\\
E91 QDB retained & \(\bigl(8+\sqrt{24+17\sqrt2}\bigr)/16\) & \(\mathbb Q(\sqrt2,\sqrt{24+17\sqrt2})\)\\
\bottomrule
\end{tabular}
\end{table}

We also model all DV SDPs in CVXPY and solve them with SCS and CLARABEL.
A separate NumPy check rebuilds each tester and reports the residuals,
slack eigenvalues, and duality gap of the stored solver output. These
floating-point results are a reproducibility check, not the proof of exact
feasibility. They are shown in \Cref{tab:mf-numerical-checks}; the exact
certificate verifier is run with
\texttt{uv run qdb-benchmark exact-certificate {-}{-}protocol all}.

\begin{table}[H]
\centering
\small
\caption{NumPy diagnostics for the stored floating-point DV MF solver output.
The objective column is the stored primal objective. The final column is the
minimum over the two dual slack spectra. Its small negative values reflect
finite solver tolerance and show why this output is not itself a certificate;
the exact proof is summarized in \Cref{tab:mf-exact-certificates}.}
\label{tab:mf-numerical-checks}
\begin{tabular}{@{}lcccc@{}}
\toprule
Protocol & Objective & Duality gap & Comb residual & Minimum dual slack eigenvalue\\
\midrule
QDB 2017 & \(0.999999992\) & \(3.57\times10^{-9}\) & \(1.11\times10^{-16}\) & \(-3.26\times10^{-9}\)\\
QDB 2019 & \(0.904508487\) & \(3.37\times10^{-9}\) & \(1.82\times10^{-13}\) & \(-2.02\times10^{-9}\)\\
Mutual QDB, first leg & \(0.749999996\) & \(2.76\times10^{-9}\) & \(1.84\times10^{-16}\) & \(-1.43\times10^{-9}\)\\
E91 QDB retained & \(0.933200437\) & \(2.55\times10^{-11}\) & \(1.81\times10^{-11}\) & \(-2.46\times10^{-9}\)\\
\bottomrule
\end{tabular}
\end{table}

\section{QDB 2017 one-round values}
\label{app:qdb2017}

QDB 2017~\cite{abidin2017towards} derives the fast-phase basis string
from a PRF and authenticates the session with a final MAC over the
fast-phase transcript; the protocol flow is shown in
\Cref{fig:qdb2017-flow}. The fast phase confirms that the prover is
close (ranging), and the MAC that it is genuine (authentication). As
explained in \Cref{sec:applications}, the one-round fast-phase game
strips the MAC, so the main-text benchmark does not apply to this
protocol. We report its fast-phase-only values here for completeness.

In the notation of \Cref{sec:applications}, the reduced one-round
instantiation is as follows. The branch is \(\omega=(a,c)\), with DF
response-time view \(\view(\omega)=a\). The challenge state and accept
projector are \(\rho_\omega=\Pi_\omega=|c\rangle_a\langle c|\). The
honest-prover pre-ask channel is
\(\mathcal N_a(\rho)=\sum_t\langle t|_a\rho|t\rangle_a\,
|t\rangle_a\langle t|\) with \(t\) summed over \(\{0,1\}\): the same
measure-and-re-prepare map as QDB 2019, but with the response basis
equal to the challenge basis. The one-round values are \(\pDF=0.5\)
and \(\pMF=1\); the MF value is certified in
\Cref{app:mf-certificate-checks}.

Without the authentication the MAC provides, the adversary does not
even need the prover: it reflects the challenge state straight back to
the verifier, which accepts with probability \(1\). This reflection
attack is already noted in~\cite{abidin2017towards}, and the full
protocol uses the MAC to catch it. The value \(\pMF=1\) therefore
reflects the missing authentication, not a weakness of the full
protocol. It does show that ranging and authentication cannot be
separated in a DB protocol: a fast phase that only ranges provides no
MF resistance on its own.

\section{QDB protocol-flow diagrams}
\label{app:protocol-flows}

This appendix collects the protocol-flow diagrams for the QDB instantiations
in \Cref{sec:applications} and \Cref{app:qdb2017}.

\begin{figure}[tbp]
  \centering
  \resizebox{0.90\linewidth}{!}{%
  \begin{tikzpicture}[
      >=Latex,
      font=\small,
      actor/.style={draw, thick, fill=white, inner xsep=6pt, inner ysep=2pt, font=\bfseries},
      box/.style={draw, thick, fill=white, align=center, inner sep=4pt},
      smallbox/.style={draw, thick, fill=white, align=center, inner sep=3pt},
      timeline/.style={thick},
      msg/.style={thick, -Latex},
      bmsg/.style={thick, Latex-},
      qmsg/.style={thick, -Latex, decorate, decoration={snake, amplitude=0.35mm, segment length=2.2mm, pre length=0mm, post length=1.2mm}},
      qbmsg/.style={thick, Latex-, decorate, decoration={snake, amplitude=0.35mm, segment length=2.2mm, pre length=1.2mm, post length=0mm}}
  ]
    \def\LX{0}
    \def\RX{6.8}
    \pgfmathsetmacro{\MX}{0.5*(\LX+\RX)}
    \def\YEnd{-11.05}

    \node[actor] (V) at (\LX,0) {Verifier $(x)$};
    \node[actor] (P) at (\RX,0) {Prover $(x)$};

    \begin{scope}[yshift=-0.18cm]
    \node[box, minimum width=3.0cm] (Nv) at (\LX,-0.70) {$N_v \sample \{0,1\}^{\lambda}$};
    \node[box, minimum width=3.0cm] (Np) at (\RX,-0.70) {$N_p \sample \{0,1\}^{\lambda}$};

    \draw[timeline] (V.south) -- (Nv.north);
    \draw[timeline] (P.south) -- (Np.north);
    \draw[timeline] (\LX,-1.05) -- (\LX,\YEnd);
    \draw[timeline] (\RX,-1.05) -- (\RX,\YEnd);
    \draw[line width=1.6pt] (\LX-0.30,\YEnd) -- (\LX+0.30,\YEnd);
    \draw[line width=1.6pt] (\RX-0.30,\YEnd) -- (\RX+0.30,\YEnd);

    \draw[msg] (\LX,-1.50) -- (\RX,-1.50);
    \node at (\MX,-1.22) {$N_v$};
    \draw[bmsg] (\LX,-1.72) -- (\RX,-1.72);
    \node at (\MX,-2.00) {$N_p$};

    \node[box, minimum width=4.0cm] (Lc) at (\LX,-2.75)
      {$a := f_x(N_v,N_p) \in \{0,1\}^n$\\[-2pt]$c \overset{\$}{\leftarrow} \{0,1\}^n$};
    \node[box, minimum width=4.0cm] (Rc) at (\RX,-2.75)
      {$a := f_x(N_v,N_p) \in \{0,1\}^n$};

    \draw[dashed, thick] (-2.10,-3.68) rectangle (8.65,-7.40);

    \draw[qmsg] (\LX,-4.40) -- (\RX,-4.40);
    \node at (\MX,-4.12) {$|c_i\rangle_{a_i}$};

    \draw[qbmsg] (\LX,-6.30) -- (\RX,-6.30);
    \node at (\MX,-6.02) {$|c_{P,i}\rangle_{a_i}$};

    \draw[thick, decorate, decoration={brace, amplitude=5pt, mirror}]
      (-0.25,-4.40) -- (-0.25,-6.30)
      node[midway, left=4pt] {time $\Delta t_i$};

    \draw[thick, decorate, decoration={brace, amplitude=5pt}]
      (8.95,-3.73) -- (8.95,-7.40)
      node[midway, right=9pt, align=left] {Repeat\\for\\$i=1,\ldots,n$};

    \draw[bmsg] (\LX,-8.10) -- (\RX,-8.10);
    \node at (\MX,-7.82) {$\tau=\operatorname{MAC}_x(ID_{\verifier},ID_{\prover},N_v,N_p,c_P)$};

    \node[smallbox, minimum width=3.15cm] (measureBox) at (\RX,-5.35)
      {$c_{P,i} := \operatorname{Meas}_{a_i}(|c_i\rangle_{a_i})$};

    \node[smallbox, minimum width=3.65cm] (checkBox) at (\LX,-6.85)
      {$c_{V,i} := \operatorname{Meas}_{a_i}(|c_{P,i}\rangle_{a_i})$};

    \node[smallbox, minimum width=5.25cm, align=center] at (\LX,-9.60)
      {Accept iff\\
      $\tau=\operatorname{MAC}_x(ID_{\verifier},ID_{\prover},$\\[-1pt]
      $N_v,N_p,c_V)$,\\
      $\forall i:\ c_i=c_{V,i}$, and\\
      $\max_i \Delta t_i \le t_{\max}$};
    \end{scope}
  \end{tikzpicture}%
  }
  \caption{Overview of the QDB 2017 protocol~\cite{abidin2017towards}. Before the fast rounds, the parties exchange nonces and derive the PRF output \(a\). In each fast round, the verifier prepares \(|c_i\rangle_{a_i}\), and the prover measures it in basis \(a_i\) and re-prepares the outcome in the same basis. After all rounds, the prover sends \(\tau=\operatorname{MAC}_x(ID_{\verifier},ID_{\prover},N_v,N_p,c_P)\). The verifier accepts only if \(\tau\) equals the MAC recomputed on \(c_V\), all recovered bits match the originals, and every round-trip time \(\Delta t_i\) stays below \(t_{\max}\).}
  \label{fig:qdb2017-flow}
\end{figure}

\begin{figure}[H]
  \centering
  \resizebox{0.90\linewidth}{!}{%
  \begin{tikzpicture}[
      >=Latex,
      font=\small,
      actor/.style={draw, thick, fill=white, inner xsep=6pt, inner ysep=2pt, font=\bfseries},
      box/.style={draw, thick, fill=white, align=center, inner sep=4pt},
      smallbox/.style={draw, thick, fill=white, align=center, inner sep=3pt},
      timeline/.style={thick},
      msg/.style={thick, -Latex},
      bmsg/.style={thick, Latex-},
      qmsg/.style={thick, -Latex, decorate, decoration={snake, amplitude=0.35mm, segment length=2.2mm, pre length=0mm, post length=1.2mm}},
      qbmsg/.style={thick, Latex-, decorate, decoration={snake, amplitude=0.35mm, segment length=2.2mm, pre length=1.2mm, post length=0mm}}
  ]
    \def\LX{0}
    \def\RX{6.8}
    \pgfmathsetmacro{\MX}{0.5*(\LX+\RX)}
    \def\YEnd{-8.65}

    \node[actor] (V) at (\LX,0) {Verifier $(x)$};
    \node[actor] (P) at (\RX,0) {Prover $(x)$};

    \begin{scope}[yshift=-0.18cm]
    \node[box, minimum width=3.0cm] (Nv) at (\LX,-0.70) {$N_v \sample \{0,1\}^{\lambda}$};
    \node[box, minimum width=3.0cm] (Np) at (\RX,-0.70) {$N_p \sample \{0,1\}^{\lambda}$};

    \draw[timeline] (V.south) -- (Nv.north);
    \draw[timeline] (P.south) -- (Np.north);
    \draw[timeline] (\LX,-1.05) -- (\LX,\YEnd);
    \draw[timeline] (\RX,-1.05) -- (\RX,\YEnd);
    \draw[line width=1.6pt] (\LX-0.30,\YEnd) -- (\LX+0.30,\YEnd);
    \draw[line width=1.6pt] (\RX-0.30,\YEnd) -- (\RX+0.30,\YEnd);

    \draw[msg] (\LX,-1.50) -- (\RX,-1.50);
    \node at (\MX,-1.22) {$N_v$};
    \draw[bmsg] (\LX,-1.72) -- (\RX,-1.72);
    \node at (\MX,-2.00) {$N_p$};

    \node[box, minimum width=4.25cm] (Lc) at (\LX,-2.75)
      {$a\|b := f_x(N_v,N_p) \in \{0,1\}^{2n}$\\[-2pt]$c \overset{\$}{\leftarrow} \{0,1\}^n$};
    \node[box, minimum width=4.25cm] (Rc) at (\RX,-2.75)
      {$a\|b := f_x(N_v,N_p) \in \{0,1\}^{2n}$};

    \draw[dashed, thick] (-2.10,-3.68) rectangle (8.65,-6.95);

    \draw[qmsg] (\LX,-4.35) -- (\RX,-4.35);
    \node at (\MX,-4.07) {$|c_i\rangle_{a_i}$};

    \draw[qbmsg] (\LX,-5.95) -- (\RX,-5.95);
    \node at (\MX,-5.67) {$|c_i'\rangle_{b_i}$};

    \draw[thick, decorate, decoration={brace, amplitude=5pt, mirror}]
      (-0.25,-4.35) -- (-0.25,-5.95)
      node[midway, left=4pt] {time $\Delta t_i$};

    \draw[thick, decorate, decoration={brace, amplitude=5pt}]
      (8.95,-3.73) -- (8.95,-6.95)
      node[midway, right=9pt, align=left] {Repeat\\for\\$i=1,\ldots,n$};

    \node[smallbox, minimum width=3.15cm] (measureBox) at (\RX,-5.15)
      {$c_i' := \operatorname{Meas}_{a_i}(|c_i\rangle_{a_i})$};

    \node[smallbox, minimum width=3.65cm] (checkBox) at (\LX,-6.45)
      {$c_i'' := \operatorname{Meas}_{b_i}(|c_i'\rangle_{b_i})$};

    \node[smallbox, minimum width=3.8cm, align=center] at (\LX,-7.80)
      {Accept iff\\
      $\forall i:\ c_i=c_i''$ and\\
      $\max_i \Delta t_i \le t_{\max}$};
    \end{scope}
  \end{tikzpicture}%
  }
  \caption{Overview of the QDB 2019 protocol~\cite{abidin2019quantum}. Before the fast rounds, the parties exchange nonces and derive the PRF outputs \(a\) and \(b\). In each fast round, the verifier prepares \(|c_i\rangle_{a_i}\), and the prover measures it in basis \(a_i\) and re-prepares the outcome in basis \(b_i\). After all rounds, the verifier accepts only if all recovered bits match the originals and every round-trip time \(\Delta t_i\) stays below \(t_{\max}\).}
  \label{fig:qdb2019-flow}
\end{figure}

\begin{figure}[H]
  \centering
  \resizebox{0.90\linewidth}{!}{%
  \begin{tikzpicture}[
      >=Latex,
      font=\small,
      actor/.style={draw, thick, fill=white, inner xsep=6pt, inner ysep=2pt, font=\bfseries},
      box/.style={draw, thick, fill=white, align=center, inner sep=4pt},
      smallbox/.style={draw, thick, fill=white, align=center, inner sep=3pt},
      timeline/.style={thick},
      msg/.style={thick, -Latex},
      bmsg/.style={thick, Latex-},
      qmsg/.style={thick, -Latex, decorate, decoration={snake, amplitude=0.35mm, segment length=2.2mm, pre length=0mm, post length=1.2mm}},
      qbmsg/.style={thick, Latex-, decorate, decoration={snake, amplitude=0.35mm, segment length=2.2mm, pre length=1.2mm, post length=0mm}}
  ]
    \def\LX{0}
    \def\RX{7.4}
    \pgfmathsetmacro{\MX}{0.5*(\LX+\RX)}
    \def\YEnd{-14.80}

    \node[actor] (V) at (\LX,0) {Verifier $(x)$};
    \node[actor] (P) at (\RX,0) {Prover $(x)$};

    \begin{scope}[yshift=-0.35cm]
    \draw[timeline] (V.south) -- (\LX,\YEnd);
    \draw[timeline] (P.south) -- (\RX,\YEnd);
    \draw[line width=1.6pt] (\LX-0.30,\YEnd) -- (\LX+0.30,\YEnd);
    \draw[line width=1.6pt] (\RX-0.30,\YEnd) -- (\RX+0.30,\YEnd);

    \node[box, minimum width=3.35cm] at (\LX,-0.75)
      {$N_v \sample \{0,1\}^{\lambda}$};
    \node[box, minimum width=3.35cm] at (\RX,-0.75)
      {$N_p \sample \{0,1\}^{\lambda}$};

    \draw[msg] (\LX,-1.55) -- (\RX,-1.55);
    \node at (\MX,-1.27) {$N_v$};
    \draw[bmsg] (\LX,-1.77) -- (\RX,-1.77);
    \node at (\MX,-2.05) {$N_p$};

    \node[box, minimum width=4.75cm] at (\LX,-2.65)
      {$a\|b\|\gamma := f_x(N_v,N_p) \in \{0,1\}^{3n}$};
    \node[box, minimum width=4.75cm] at (\RX,-2.65)
      {$a\|b\|\gamma := f_x(N_v,N_p) \in \{0,1\}^{3n}$\\[2pt]$r \sample \{0,1\}^n$};

    \draw[bmsg] (\LX,-3.65) -- (\RX,-3.65);
    \node at (\MX,-3.37) {$\operatorname{Com}(r)$};

    \draw[dashed, thick] (-2.85,-4.20) rectangle (10.05,-11.40);

    \node[smallbox, minimum width=4.80cm] at (\LX,-5.00)
      {Prepare EPR pair $(EP_{v,i},EP_{p,i})$\\[1pt]$m_i := \operatorname{Meas}_{a_i}(EP_{v,i})$};

    \draw[qmsg] (\LX,-5.95) -- (\RX,-5.95);
    \node at (\MX,-5.67) {$EP_{p,i}$};

    \node[smallbox, minimum width=4.80cm] at (\RX,-6.95)
      {$m_i' := \operatorname{Meas}_{a_i}(EP_{p,i})$\\[1pt]prepare $|m_i' \oplus r_i\rangle_{b_i}$};

    \draw[qbmsg] (\LX,-7.90) -- (\RX,-7.90);
    \node at (\MX,-7.62) {$|m_i' \oplus r_i\rangle_{b_i}$};

    \node[smallbox, minimum width=4.95cm] at (\LX,-8.90)
      {$d_i := \operatorname{Meas}_{b_i}(|m_i' \oplus r_i\rangle_{b_i})$\\[1pt]recover $r_i' := d_i \oplus m_i$};

    \draw[qmsg] (\LX,-9.85) -- (\RX,-9.85);
    \node at (\MX,-9.57) {$|r_i'\rangle_{\gamma_i}$};

    \node[smallbox, minimum width=4.60cm] at (\RX,-10.70)
      {$r_i'' := \operatorname{Meas}_{\gamma_i}(|r_i'\rangle_{\gamma_i})$\\[1pt]check $r_i''=r_i$};

    \draw[thick, decorate, decoration={brace, amplitude=5pt, mirror}]
      (-0.25,-5.95) -- (-0.25,-7.90)
      node[midway, left=4pt] {time $\Delta t_i^{(1)}$};

    \draw[thick, decorate, decoration={brace, amplitude=5pt}]
      (7.65,-7.90) -- (7.65,-9.85)
      node[midway, right=5pt] {time $\Delta t_i^{(2)}$};

    \draw[thick, decorate, decoration={brace, amplitude=5pt}]
      (10.35,-4.25) -- (10.35,-11.40)
      node[midway, right=9pt, align=left] {Repeat\\for\\$i=1,\ldots,n$};

    \draw[bmsg] (\LX,-12.15) -- (\RX,-12.15);
    \node at (\MX,-11.87) {$m', \operatorname{Open}(r)$};

    \node[smallbox, minimum width=8.10cm, align=center] at (\MX,-13.55)
      {Accept iff\\
      the commitment opening is valid,\\
      $\forall i:\ r_i''=r_i$, $m' = m_1\|\cdots\|m_n$, and\\
      $\max_i \Delta t_i^{(1)} \le t_{\max}$ and $\max_i \Delta t_i^{(2)} \le t_{\max}$};
    \end{scope}
  \end{tikzpicture}%
  }
  \caption{Overview of the Mutual QDB protocol~\cite{abidin2025entanglement}. Before the fast rounds, the parties exchange nonces and derive the PRF outputs \(a\), \(b\), and \(\gamma\), and the prover commits to a random mask string \(r\). In each fast round, the verifier prepares an EPR pair, measures its own half in basis \(a_i\), and sends the prover half \(EP_{p,i}\). The prover measures in the same basis and returns \(|m_i' \oplus r_i\rangle_{b_i}\). The verifier then recovers the mask bit and sends back \(|r_i'\rangle_{\gamma_i}\). After all rounds, the prover opens the commitment and reveals \(m'\). The verifier accepts only if the commitment opening is valid, the revealed measurement string is consistent with its recorded outcomes, and every first-leg round-trip time stays below \(t_{\max}\). The prover accepts only if its committed mask bits come back unchanged and every second-leg round-trip time stays below \(t_{\max}\).}
\label{fig:mutualqdb-flow}
\end{figure}

\begin{figure}[H]
  \centering
  \resizebox{0.90\linewidth}{!}{%
  \begin{tikzpicture}[
      >=Latex,
      font=\small,
      actor/.style={draw, thick, fill=white, inner xsep=6pt, inner ysep=2pt, font=\bfseries},
      box/.style={draw, thick, fill=white, align=center, inner sep=4pt},
      smallbox/.style={draw, thick, fill=white, align=center, inner sep=3pt},
      timeline/.style={thick},
      msg/.style={thick, -Latex},
      bmsg/.style={thick, Latex-},
      qmsg/.style={thick, -Latex, decorate, decoration={snake, amplitude=0.35mm, segment length=2.2mm, pre length=0mm, post length=1.2mm}},
      qbmsg/.style={thick, Latex-, decorate, decoration={snake, amplitude=0.35mm, segment length=2.2mm, pre length=1.2mm, post length=0mm}}
  ]
    \def\LX{0}
    \def\RX{7.2}
    \pgfmathsetmacro{\MX}{0.5*(\LX+\RX)}
    \def\YEnd{-12.15}

    \node[actor] (V) at (\LX,0) {Verifier $(x)$};
    \node[actor] (P) at (\RX,0) {Prover $(x)$};

    \begin{scope}[yshift=-0.45cm]
    \draw[timeline] (V.south) -- (\LX,\YEnd);
    \draw[timeline] (P.south) -- (\RX,\YEnd);
    \draw[line width=1.6pt] (\LX-0.30,\YEnd) -- (\LX+0.30,\YEnd);
    \draw[line width=1.6pt] (\RX-0.30,\YEnd) -- (\RX+0.30,\YEnd);

    \node[box, minimum width=3.45cm] (Lv) at (\LX,-0.75)
      {$a_v \sample \{0,1,2\}^n$\\[-2pt]$N_v \sample \{0,1\}^{\lambda}$};
    \node[box, minimum width=3.45cm] (Rp) at (\RX,-0.75)
      {$a_p \sample \{0,1,2\}^n$\\[-2pt]$N_p \sample \{0,1\}^{\lambda}$};

    \draw[msg] (\LX,-1.88) -- (\RX,-1.88);
    \node at (\MX,-1.60) {$N_v$};
    \draw[bmsg] (\LX,-2.10) -- (\RX,-2.10);
    \node at (\MX,-2.38) {$N_p$};

    \node[box, minimum width=4.0cm] at (\LX,-2.85)
      {$b := f_x(N_v,N_p) \in \{0,1\}^n$};
    \node[box, minimum width=4.0cm] at (\RX,-2.85)
      {$b := f_x(N_v,N_p) \in \{0,1\}^n$};

    \draw[dashed, thick] (-2.80,-3.72) rectangle (9.55,-8.78);

    \node[smallbox, minimum width=4.55cm] at (\LX,-4.52)
      {Prepare EPR pair $(EP_{v,i},EP_{p,i})$\\[1pt]$m_i := \operatorname{Meas}_{a_{v,i}}(EP_{v,i})$};

    \draw[qmsg] (\LX,-5.48) -- (\RX,-5.48);
    \node at (\MX,-5.20) {$EP_{p,i}$};

    \node[smallbox, minimum width=4.15cm] at (\RX,-6.42)
      {$m_i' := \operatorname{Meas}_{a_{p,i}}(EP_{p,i})$\\[1pt]prepare $|m_i'\rangle_{b_i}$};

    \draw[qbmsg] (\LX,-7.35) -- (\RX,-7.35);
    \node at (\MX,-7.07) {$|m_i'\rangle_{b_i}$};

    \node[smallbox, minimum width=4.20cm] at (\LX,-8.12)
      {$m_i'' := \operatorname{Meas}_{b_i}(|m_i'\rangle_{b_i})$};

    \draw[thick, decorate, decoration={brace, amplitude=5pt, mirror}]
      (-0.25,-5.48) -- (-0.25,-7.35)
      node[midway, left=4pt] {time $\Delta t_i$};

    \draw[thick, decorate, decoration={brace, amplitude=5pt}]
      (9.85,-3.77) -- (9.85,-8.78)
      node[midway, right=9pt, align=left] {Repeat\\for\\$i=1,\ldots,n$};

    \draw[bmsg] (\LX,-9.65) -- (\RX,-9.65);
    \node at (\MX,-9.37) {$a_p$};

    \node[smallbox, minimum width=6.95cm, align=center] at (\LX,-11.05)
      {Accept iff\\
      $\forall i:\ (a_{v,i},a_{p,i})\in\{(1,0),(2,1)\}\Rightarrow m_i''=\tilde m_i$,\\
      $|S_{\mathrm{CHSH}}| \ge S_{\min}$, and\\
      $\max_i \Delta t_i \le t_{\max}$};
    \end{scope}
  \end{tikzpicture}%
  }
  \caption{Overview of the E91 QDB protocol~\cite{bogner2024entangled}. Before the fast rounds, the parties choose the measurement-setting strings \(a_v\) and \(a_p\), exchange nonces, and derive the response-basis string \(b\). In each fast round, the verifier prepares an entangled pair, measures its own half in setting \(a_{v,i}\), and sends the prover half \(EP_{p,i}\). The prover measures in setting \(a_{p,i}\) and returns \(|m_i'\rangle_{b_i}\). After all rounds, the prover reveals \(a_p\). The verifier accepts only if the logical target \(\tilde m_i\) matches on the retained setting pairs \((1,0)\) and \((2,1)\), the sign-adjusted CHSH quantity satisfies \(|S_{\mathrm{CHSH}}|\ge S_{\min}\) on \((0,0)\), \((0,2)\), \((2,0)\), and \((2,2)\), and every round-trip time \(\Delta t_i\) stays below \(t_{\max}\). The two conventions for the logical target \(\tilde m_i\) are stated in \Cref{sec:applications} (E91 QDB).}
  \label{fig:e91qdb-flow}
\end{figure}

\begin{figure}[H]
  \centering
  \resizebox{0.90\linewidth}{!}{%
  \begin{tikzpicture}[
      >=Latex,
      font=\small,
      actor/.style={draw, thick, fill=white, inner xsep=6pt, inner ysep=2pt, font=\bfseries},
      box/.style={draw, thick, fill=white, align=center, inner sep=4pt},
      smallbox/.style={draw, thick, fill=white, align=center, inner sep=3pt},
      timeline/.style={thick},
      msg/.style={thick, -Latex},
      bmsg/.style={thick, Latex-},
      qmsg/.style={thick, -Latex, decorate, decoration={snake, amplitude=0.35mm, segment length=2.2mm, pre length=0mm, post length=1.2mm}},
      qbmsg/.style={thick, Latex-, decorate, decoration={snake, amplitude=0.35mm, segment length=2.2mm, pre length=1.2mm, post length=0mm}}
  ]
    \def\LX{0}
    \def\RX{6.9}
    \pgfmathsetmacro{\MX}{0.5*(\LX+\RX)}
    \def\YEnd{-9.65}

    \node[actor] (V) at (\LX,0) {Verifier $(x)$};
    \node[actor] (P) at (\RX,0) {Prover $(x)$};

    \begin{scope}[yshift=-0.18cm]
    \node[box, minimum width=3.0cm] (Nv) at (\LX,-0.70) {$N_v \sample \{0,1\}^{\lambda}$};
    \node[box, minimum width=3.0cm] (Np) at (\RX,-0.70) {$N_p \sample \{0,1\}^{\lambda}$};

    \draw[timeline] (V.south) -- (Nv.north);
    \draw[timeline] (P.south) -- (Np.north);
    \draw[timeline] (\LX,-1.05) -- (\LX,\YEnd);
    \draw[timeline] (\RX,-1.05) -- (\RX,\YEnd);
    \draw[line width=1.6pt] (\LX-0.30,\YEnd) -- (\LX+0.30,\YEnd);
    \draw[line width=1.6pt] (\RX-0.30,\YEnd) -- (\RX+0.30,\YEnd);

    \draw[msg] (\LX,-1.50) -- (\RX,-1.50);
    \node at (\MX,-1.22) {$N_v$};
    \draw[bmsg] (\LX,-1.72) -- (\RX,-1.72);
    \node at (\MX,-2.00) {$N_p$};

    \begin{scope}[yshift=0.15cm]
    \node[box, minimum width=4.25cm] (Lc) at (\LX,-2.75)
      {$a\|b := f_x(N_v,N_p) \in \{0,1\}^{2n}$};
    \node[box, minimum width=4.25cm] (Rc) at (\RX,-2.75)
      {$a\|b := f_x(N_v,N_p) \in \{0,1\}^{2n}$};

    \begin{scope}[yshift=0.40cm]
    \draw[dashed, thick] (-2.35,-3.81) rectangle (8.75,-8.51);

    \node[smallbox, minimum width=4.2cm] at (\LX,-4.45)
      {Prepare EPR pair $(A_i,B_i)$\\[-2pt]$m_i := \mathrm{HD}_{a_i}(A_i)$};

    \draw[qmsg] (\LX,-5.35) -- (\RX,-5.35);
    \node at (\MX,-5.07) {EPR mode $B_i$};

    \node[smallbox, minimum width=3.10cm] at (\RX,-6.20)
      {$m_i' := \mathrm{HD}_{a_i}(B_i)$};

    \draw[qbmsg] (\LX,-7.20) -- (\RX,-7.20);
    \node at (\MX,-6.92) {Prepare mode $C_i$ from $m_i'$ using $b_i$};

    \node[smallbox, minimum width=3.95cm] at (\LX,-8.05)
      {$m_i'' := \mathrm{HD}_{b_i}(C_i)$};

    \draw[thick, decorate, decoration={brace, amplitude=5pt, mirror}]
      (-0.25,-5.35) -- (-0.25,-7.20)
      node[midway, left=4pt] {time $\Delta t_i$};

    \draw[thick, decorate, decoration={brace, amplitude=5pt}]
      (9.05,-3.86) -- (9.05,-8.51)
      node[midway, right=9pt, align=left] {Repeat\\for\\$i=1,\ldots,n$};
    \end{scope}

    \node[smallbox, minimum width=4.2cm, align=center] at (\LX,-8.95)
      {Accept iff\\
      $\frac{1}{n}\sum_i \mathbf 1\{|m_i-m_i''|\le\delta_{\mathrm{cv}}\}
      \ge \tau_{\mathrm{cv}}$ and\\
      $\mathsf T(\Delta t_1,\ldots,\Delta t_n)\le \tau_t$};
    \end{scope}
    \end{scope}
  \end{tikzpicture}%
  }
  \caption{Overview of the Gaussian CV-QDB protocol~\cite{bogner2025continuous}. Before the fast rounds, the parties exchange nonces and derive the PRF outputs \(a\) and \(b\). In each fast round, the verifier prepares an EPR pair, measures one mode in basis \(a_i\), and sends the other mode to the prover. The prover measures that mode in basis \(a_i\) and prepares a response mode from the outcome using \(b_i\). After all rounds, the verifier accepts only if the returned quadrature lies within tolerance \(\delta_{\mathrm{cv}}\) in a sufficient fraction of rounds and the round-trip times pass the timing check \(\mathsf T(\Delta t_1,\ldots,\Delta t_n)\le\tau_t\).}
  \label{fig:cvqdb-flow}
\end{figure}

\section{CV-QDB finite-grid Gaussian MF estimator details}
\label{app:cv-grid-benchmark}

This appendix gives the grid and estimator details behind the restricted
Gaussian MF estimator of \Cref{sec:cvqdb}, computed with zero added response
noise. A Gaussian attack depends on only a few continuous parameters, such as
the probe amplitude and the measurement angles. We restrict each one to a
finite set of values and optimize over the resulting grid. Refining the grid
may expose stronger attacks.

At a high level, the attack has three steps. First, the adversary sends a
Gaussian probe to the honest prover and measures the returned mode, obtaining a
pre-ask value \(Z_\phi\) that leaks information about the hidden bases. Second,
when the verifier's challenge mode arrives, the adversary picks a measurement
angle \(\theta\) from \(Z_\phi\) and reads off the outcome \(Y_\theta\). Third,
it turns \(Y_\theta\) into a linear estimate of the verifier's reference
quadrature and returns that as its response. The grid searches over the
parameters of the first two steps for the estimate that passes most often.

Concretely, the adversary sends a Gaussian probe mode \(Q\). The prover
measures the quadrature selected by \(a\) and returns a response mode whose
\(b\)-quadrature contains \(g_a U_a\),
where \(U_a=X_Q\) for \(a=0\), \(U_a=P_Q\) for \(a=1\), and
\(g_0=1,g_1=-1\). The adversary measures this returned mode at angle \(\phi\),
obtaining \(Z_\phi\). Later, after receiving the verifier's actual challenge
mode, it chooses a challenge-measurement angle \(\theta(Z_\phi)\), obtains
\(Y_{\theta(Z_\phi)}\), and prepares a response mode with quadrature estimates
\(\widehat M_0\) and \(\widehat M_1\). We write \(M_a\) for the verifier's
reference quadrature in basis \(a\). The response estimates are scored with
zero added noise in the measured response quadrature (an attacker-favorable
relaxation, not a certified finite-energy output-mode model).

Ideally, the adversary would maximize its average chance of landing within
\(\delta_{\mathrm{cv}}\) of the verifier's reference quadrature,
\[
\frac14
\sum_{a,b}
\Pr\left[
|M_a-\widehat M_b(Z_\phi,Y_{\theta(Z_\phi)})|
\le
\delta_{\mathrm{cv}}
\right],
\]
over the energy-constrained Gaussian probe, the pre-ask measurement, the
adaptive challenge measurement, and the Gaussian response estimators. To our
knowledge, no technique solves this continuous nonconvex optimization exactly,
so we report the finite-grid estimate \(\pMFCV\) instead, with the grid and
energy bound below.

The reported grid fixes the pre-ask probe to an unsqueezed displaced Gaussian
state. Its displacement angle is \(\pi/4\) and its amplitude runs from \(0\) to
\(3.0\) in steps of \(0.1\), subject to \(N_{\rm pre}\le2\). The pre-ask angle
\(\phi\) and the challenge angle \(\theta\) each range over \(181\) points
uniformly spaced in \([0, 2\pi)\), and the pre-ask outcome \(Z_\phi\) over a
\(1001\)-point quadrature grid. The measured response quadrature carries zero
added noise in each response basis \(b\), matching the noiseless DV upper
bound, while the inactive quadrature is assigned vacuum variance
\(\hbar/2=1\). The pre-ask energy uses the \(\hbar=2\) convention,
\[
N_{\rm pre}
=
\frac{
\operatorname{Var}(X_Q)+\operatorname{Var}(P_Q)
+\mathbb E[X_Q]^2+\mathbb E[P_Q]^2
}{2\hbar}
-\frac12 .
\]

The \(Z_\phi\)-grid is uniform on
\[
\left[
\min_{a,b}(\mu_{ab}-8\sqrt{v_{ab}}),
\max_{a,b}(\mu_{ab}+8\sqrt{v_{ab}})
\right],
\]
where \(\mu_{ab}\) and \(v_{ab}\) are the mean and variance of \(Z_\phi\) for
the branch labeled by \((a,b)\), and the integral over \(Z_\phi\) is evaluated
by the trapezoidal rule. For each pre-ask outcome \(Z_\phi=z\), let
\(q_{ab}(z)=\Pr[a,b\mid Z_\phi=z]\) be the posterior probability that the
hidden bases are \((a,b)\) given \(z\). For a candidate challenge angle \(\theta\), let \(Y_\theta\) be the measured
challenge quadrature. The adversary does not know the hidden bases, so it
estimates \(M_a\) from \(Y_\theta\) by linear regression and weights the
regression coefficient by the posterior over \(a\),
\[
\widehat M_b=\kappa_b(z,\theta)Y_\theta,\qquad
\kappa_b(z,\theta)=
\frac{
\sum_a q_{ab}(z)\operatorname{Cov}(M_a,Y_\theta)/\operatorname{Var}(Y_\theta)
}{
\sum_a q_{ab}(z)
},
\]
with \(\kappa_b=0\) when the denominator is zero. The conditional error
variance for branch \((a,b)\), written \(V_{\mathrm{err}}\), is
\[
\operatorname{Var}(M_a)+\kappa_b^2\operatorname{Var}(Y_\theta)
-2\kappa_b\operatorname{Cov}(M_a,Y_\theta)
+\operatorname{Var}_{\mathrm{resp}}(b).
\]
The branch then passes with probability
\[
\Pr(|\Delta|\le \delta_{\mathrm{cv}})
=
\operatorname{erf}\left(
\frac{\delta_{\mathrm{cv}}}{\sqrt{2V_{\mathrm{err}}}}
\right).
\]
A larger \(\operatorname{Var}_{\mathrm{resp}}(b)\) therefore raises
\(V_{\mathrm{err}}\) and lowers this probability for a fixed estimator, so
setting the response noise to zero is attacker-favorable within this estimator
model. It is not, however, a physical finite-energy constraint. If the
adversary must output a single CV mode while the hidden basis \(b\) is
uncertain, its two response quadratures jointly obey a covariance uncertainty
bound.
We do not optimize over nonlinear estimators or over physical two-quadrature
response covariance constraints. With this restricted grid, we obtain the
zero-response-noise estimator
\[
\pMFCV(N_{\rm pre}\le 2)\approx 0.913815.
\]

\end{document}